\newtheorem{Corollary}{Corollary}
\newtheorem{Example}{Example}
\newtheorem{Lemma}{Lemma}
\newcolumntype{+}{!{\vrule width 2pt}}
\newlength\savedwidth
\newcommand\thickhline{\noalign{\global\savedwidth\arrayrulewidth\global\arrayrulewidth 2pt}%
\hline
\noalign{\global\arrayrulewidth\savedwidth}}
\newcommand{\pospart}[1]{\left\vert\,{#1}\,\right\vert_+}
\newcommand{\absval}[1]{\left\vert\,{#1}\,\right\vert}
\renewcommand{\@biblabel}[1]{\quad#1.}
\begin{document}
\vspace*{0.2in}

\begin{flushleft}
{\Large
\textbf\newline{Upper bounds for integrated information} 
}
\newline
\\
Alireza Zaeemzadeh\textsuperscript{1*}
 and
Giulio Tononi\textsuperscript{1*}
\\
\bigskip
\textbf{1} Department of Psychiatry, University of Wisconsin, Madison, WI 53719, USA
\\
\bigskip

%
%





* zaeemzadeh@wisc.edu (AZ), gtononi@wisc.edu (GT)

\end{flushleft}
\section*{Abstract}
Originally developed as a theory of consciousness, integrated information theory provides a mathematical framework to quantify the causal irreducibility of systems and subsets of units in the system. 
Specifically, \emph{mechanism integrated information} quantifies how much of the causal powers of a subset of units in a state, also referred to as a mechanism, cannot be accounted for by its parts. 
If the causal powers of the mechanism can be fully explained by its parts, it is reducible and its integrated information is zero. 
Here, we study the upper bound of this measure and how it is achieved. 
We study mechanisms in isolation, groups of mechanisms, and groups of causal relations among mechanisms. 
We put forward new theoretical results that show  mechanisms that share parts with each other cannot all achieve their maximum. 
We also introduce techniques to design systems that can maximize the integrated information of a subset of their mechanisms or relations. Our results can potentially be used to exploit the symmetries and constraints to reduce the computations significantly and to compare different connectivity profiles in terms of their maximal achievable integrated information.

\section*{Author summary}
Integrated Information Theory (IIT) offers a theoretical framework to quantify the causal irreducibilty of a system, subsets of the units in a system, and the causal relations among the subsets.
For example, \emph{mechanism integrated information} quantifies how much of the causal powers of a subset of units in a state cannot be accounted for by its parts. Here, we provide theoretical results on the upper bounds for this measure, how it is achieved, and why mechanisms with overlapping parts cannot all be maximally integrated. We also study the upper bounds for integrated information of causal relations among the mechanisms. The ideas introduced here can potentially pave the way to design systems with optimal causal irreducibility and to develop computationally lightweight exact or approximate measures for integrated information.


\section{Introduction}
\label{sec:intro}
Integrated information theory (IIT) has been developed as a comprehensive theory of what it takes for a system to be conscious, how much, and in which way \cite{Albantakis2023IntegratedTerms, Tononi2016IntegratedSubstrate}. 
The theory starts from the existence of experience and characterizes its essential properties—those that are true of every conceivable experience—called phenomenal ‘axioms’. These are as follows:  every experience is intrinsic (for the subject), specific (this one), unitary (a whole, irreducible to its parts), definite (this whole, having a border and grain), and structured (being composed of phenomenal distinctions and relations) \cite{Albantakis2023IntegratedTerms}. 
The theory then formulates the axioms in operational terms as ‘postulates’ of cause-effect power. Given a system’s causal model (a set of units in its current state, together with its transition probability matrix (TPM)), IIT’s postulates can then be employed to identify a substrate of consciousness or ‘complex’—a maximally irreducible set of units (having maximum \emph{system integrated information} $\varphi_s$) with its specific intrinsic cause-effect state. 
Finally, IIT’s postulates are employed to ‘unfold’ the cause-effect structure specified by the complex in its current state—the set of causal distinctions (cause-effects) specified by subsets of units within the complex, as well as their relations (the way their cause-effects overlap). 
According to IIT, the composition of the cause-effect structure corresponds to the quality of an experience—how the experience feels—and the sum of the integrated information values of its composing distinctions ($\varphi_d$) and relations ($\varphi_r$) corresponds to the quantity of consciousness ($\Phi$)—how much an entity exists intrinsically (for itself).

As shown in other work, the theory has explanatory, predictive, and inferential power \cite{Tononi2016IntegratedSubstrate}. For example, it explains why certain parts of the brain, but not others, can support consciousness, and why consciousness is lost during dreamless sleep and anesthesia \cite{Massimini2005BreakdownSleep, Pigorini2015BistabilitySleep}.
The theory has also been employed to account for the quality of experience, namely the pervasive feeling of spatial extendedness \cite{Haun2019WhyExperience} and the feeling of temporal flow \cite{Comolatti_Time}. 
It has led to clinical applications, using crude proxies of the system integrated information $\varphi_s$ that nevertheless offer what is currently the most sensitive and specific test for the presence of consciousness in non-responsive patients \cite{Casarotto2016StratificationComplexity, Massimini2005BreakdownSleep, Sarasso2020LocalInjury}. 
Finally, to the extent that the theory continues to be validated empirically in humans, it supports inferences about the presence, quantity, and quality of consciousness in other species as well as in artifacts. For example, it can be shown that computer architectures cannot be conscious in any meaningful sense because they break down into small complexes, each of which has a trivial value of $\Phi$, regardless of their ability to simulate intelligent behaviors and functions \cite{Findlay2019}.

IIT is unique in providing an exact calculus for the quantity and quality of consciousness—from first principles and based on phenomenology. However, unfolding cause-effect structures and determining the associated value of $\Phi$ exactly is not feasible for realistic systems, for several reasons. 
Among them is the difficulty of obtaining a system TPM at the right grain, the nested combinatorial explosions in assessing candidate unit grains and candidate complexes, as well as the composing distinctions and relations. For these reasons, and not unlike well-known precedents in statistical physics and quantum mechanics, it is essential to develop approximations and heuristics. 
These can then be used to estimate $\Phi$  and related quantities based on simple properties of various substrates, including the density and pattern of connections as well as various symmetries. As a step in this direction, it is important to begin establishing bounds on IIT’s basic quantities, which is the goal of this study. By obtaining such bounds, and progressively tightening them based on various properties of substrates of interest, we should ultimately be able to make well-grounded estimates about the presence and quantity of consciousness in different regions of the human brain, allowing us to more precisely test the theory’s predictions. 

A further goal will be to estimate the value of $\Phi$ in brains markedly different from ours, as well as in other natural and artificial systems. An important outcome of the search for bounds is the determination of orders of magnitudes for $\Phi$. As shown here, sums of integrated information of distinctions and relations can grow hyper-exponentially with the number of units. Therefore, a system with an architecture that allows a large number of units to constitute a maximally irreducible complex, should yield hyper-astronomical values of $\Phi$. 
We have conjectured that this should be the case for a densely connected lattice of units such as that found in posterior-central regions of the cerebral cortex \cite{Tononi2016IntegratedSubstrate, Albantakis2023IntegratedTerms}. 
In contrast, large systems with fault lines, or high levels of indeterminism and/or degeneracy, break down into many small complexes, which will necessarily have very low values of $\Phi$. 
This should be the case for many other regions of the brain, such as the cerebellum and much of prefrontal cortex, for other parts of the body, and certainly for artificial systems such as computers. The expected hyper-astronomical difference in $\Phi$ values between these substrates is essential for providing some principled guidelines about the occurrence of consciousness in nature and for informing the ongoing debate about panpsychism \cite{Tononi2015Consciousness:Everywhere}. 

\subsection{Outline}
The measures to quantify integrated information of distinctions $\varphi_d$ and relations $\varphi_r$ that capture the postulates of IIT are presented in \cite{Albantakis2023IntegratedTerms} and are discussed in detail in Section \ref{sec:mechanism_bounds} and Table \ref{tab:intro}. According to IIT, the causal components within the system must satisfy the same properties as the system except composition: they must have cause-effect power within the system (intrinsically), select a specific state (information) in a way that cannot be accounted for by their parts (integration) and over a definite set of units (exclusion) \cite{Albantakis2023IntegratedTerms}. Formalism of IIT enables us to quantify the causal \emph{irreducibility} of each subset of system. A set of units in its current state, also referred to as a \emph{mechanism}, is irreducible if its causal powers cannot be accounted for by its parts. For that, the potential cause and the potential effect of each mechanism are identified and the irreducibility is quantified by measuring how such cause or effect can be accounted for only using the parts of the mechanism. 
A central quantity in the formalism of IIT is the mechanism integrated information, denoted by $\varphi$, which measures the causal irreducibility of a single mechanism \cite{Albantakis2023IntegratedTerms,Barbosa2021MechanismInformation}. An irreducible mechanism, \emph{i.e.,} a mechanism with nonzero $\varphi$, with its corresponding cause and effect is referred to as a \emph{distinction}.

IIT also provides us with a framework to quantify how different distinctions causally interact with each other by defining and measuring the strength of the \emph{relations} among them \cite{Albantakis2023IntegratedTerms,Haun2019WhyExperience}. For example, distinctions might have overlapping effects, making them  related. 
Finally, the causal powers of the whole system can be accounted for by its cause-effect structure, which is composed of its causal distinctions and causal relations that bind together the distinctions.

In this work we study the upper bounds achievable by these measures and how we can achieve them.
Barbosa et al \cite{Barbosa2021MechanismInformation, Barbosa2020AInformation} showed the integrated information of an individual distinction can increase by adding reliable, not noisy, units to it. Here, we study the maximum achievable integrated information of a single distinction of fixed size and the trade-offs when considering the system as a whole, with all its overlapping distinctions and their shared parts. 
In particular, we discus the following bounds:
\begin{itemize}
    \item In Section \ref{subsec:single_mechanism}, we derive an upper bound for how much information a distinction can specify beyond each of its parts. 
    \item We use this bound to find the maximum integrated information achievable by a distinction, as well as a bound for the sum of integrated information of all the distinctions.
    \item Then in Section \ref{subsec:multiple_mechanisms} and Section \ref{subsec:selectivity_1}, we will show why the distinctions of a system cannot all achieve their corresponding maximum $\varphi$. We further provide a numerical bound for a special class of systems with grid-like connectivity pattern.
    \item The upper bounds for integrated information of relations, as well as the conditions necessary for achieving them, are presented in Section \ref{sec:relation_bounds}.
\end{itemize}

Section \ref{sec:experiments} provides numerical experiments and discussions on mathematical properties of the bounds and the constructions. Finally, implications of this study, a few open problems, and potential directions for research are discussed in Section \ref{sec:discussion}. 

\begin{table}[!ht]
\begin{adjustwidth}{-2.25in}{0in} 
\centering
\caption{
{\bf A summary of IIT concepts and the relevant notation used in this manuscript.}}
\small
\begin{tabular}{|m{3.5cm}|m{14cm}|}
\hline
Candidate system & A set of $N$ random variables $S = \{S_1, S_2, \dots, S_N \}$ with discrete updates and discrete state space $\Omega_S$. The system dynamics at time $t$ is defined by a transition probability matrix (TPM) $p(S_{t+1} = s_{t+1} | S_{t} = s_t)$. $s_t$ and $s_{t+1}$ denote the system state at time $t$ and $t+1$, respectively.\\ \hline
Mechanism & A subset of units in the system $M \subseteq S$ in a state $m$ with state space $\Omega_M$. The state of the mechanism is inherited from the system state $s \in \Omega_S$.\\ \hline
Purview & A subset of units $Z \subseteq S$ with state space $\Omega_Z$ whose states are constrained by a mechanism. A mechanism $M$ at time $t$ in state $m_t$ constrains the state of its potential cause purview $Z_{t-1} \subseteq S$ and its potential effect purview $Z_{t+1} \subseteq S$. The states of the purviews are denoted by $z_{t-1}$ and $z_{t+1}$. To avoid cluttering, time subscripts are often dropped.\\ \hline
Cause/effect repertoire & The effect (cause) repertoire $\pi_e(Z_{t+1} \mid M_t=m_t)$ ($\pi_c(Z_{t-1} \mid M_t=m_t)$) is defined as the probability distribution over the potential effect purview $Z_{t+1}$ (the potential cause purview $Z_{t-1}$), given that the mechanism is in state $m_t$, and is obtained by \emph{causally} marginalizing out the random variables outside the mechanism and purview. \\ \hline
Causal marginalization & The process of rendering a subset of units causally inert. The units are marginalized based on a uniform marginal distribution. The process is repeated separately for each purview unit, and they are then combined using a product,
which eliminates any residual correlations from the marginalized units with divergent connections. We formally define the process of causal marginalization for units outside a mechanism $W = S - M$ in Eq. \eqref{eq:single_unit_pi_e} and Eq. \eqref{eq:pi_e}. \\ \hline
Maximal cause/effect state & Given a mechanism $M$ in state $m$ and an effect (cause) purview $Z$, maximal effect (cause) state $z'_e(m, Z)$ ($z'_c(m, Z)$) is the state of $Z$ for which the mechanism $m$ makes the most difference compared to chance (see Eq. \eqref{eq:purview_state_e_def} \\ \hline 
Partition & A partition $\theta$ cuts a mechanism-purview pair into independent parts by causally marginalizing a subset of connections between them. $\Theta(M,Z)$ is the set of all the valid partitions for a given mechanism-purview pair and $\mathcal{N}(\theta)$ is the number of connections severed by partition $\theta$.\\ \hline
Integrated cause/effect information & The integrated effect (cause) information of $m$ over effect (cause) purview $Z$, denoted by $\varphi_e(m, Z)$ ($\varphi_c(m, Z)$) quantifies how much difference the mechanism makes to the maximal effect (cause) state above and beyond its parts, which requires a search over all possible partitions $\Theta(M, Z)$ (See Eq. \eqref{eq:purview_phi_e_def} and Eq. \eqref{eq:purview_phi_c_def}).\\ \hline
Maximally irreducible cause/effect & The maximally irreducible effect (cause) of mechanism $m$, denoted by $z_e^*(m)$ ($z_c^*(m)$) is the maximal state of effect (cause) purview $Z$ which has the maximum integrated effect (cause) information, compared to other purviews.\\ \hline
Mechanism integrated cause/effect information & Mechanism integrated effect (cause) information $\varphi_e(m)$ ($\varphi_c(m)$) is the integrated effect (cause) information of mechanism $m$ over its maximally irreducible effect (cause).\\ \hline
Mechanism integrated information & $\varphi(m) = \min \{ \varphi_e(m), \varphi_c(m) \}$. A mechanism is irreducible if it has nonzero mechanism integrated information, which means it has both an irreducible effect and an irreducible cause.\\ \hline
Causal distinction & An irreducible mechanism $m$ with its maximally irreducible cause and effect purviews defines a causal distinction $d(m) = (z^*(m), \varphi(m) )$, where $
z^*(m) = \left( z^*_c(m), z^*_e(m) \right)$ contains the maximally irreducible cause and effect purviews in their maximal states.\\ \hline
Causal relation & Any subset of distinctions in the system forms a relation if the cause purview, the effect purview, or both the cause and the effect purviews of each distinction in the subset overlap over the same units in the same states.\\ \hline
Cause-effect structure & The union of all causal distinctions in the system and the causal relations that bind together the distinctions.\\ \hline
\end{tabular}
\label{tab:intro}
\end{adjustwidth}
\end{table}
\section{Upper bounds for mechanism integrated information}
\label{sec:mechanism_bounds}
Consider a stochastic system $S$ consisting of $N$ random variables $\{S_1, S_2, \dots, S_N \}$. These random variables represent a system with transition probability matrix (TPM) defined as $p(S_{t+1} = s_{t+1} \mid S_{t} = s_{t})$, which denotes the probability that the system is in state $s_{t+1}$ at time $t+1$ given the state of the system at time $t$. The mechanism integrated information, presented in \cite{Albantakis2023IntegratedTerms}, quantifies how much a mechanism $M \subseteq S$ in state $m_t$ constrains the state of its potential causes $Z_{t-1} \subseteq S$ and its potential effects $Z_{t+1} \subseteq S$, above and beyond their parts. For that, a difference measure is developed that compares the probability distribution of a cause purview $Z_{t-1}$ or an effect purview $Z_{t+1}$, before and after partitioning the mechanism-purview pair into independent parts. 

Formally, the effect repertoire $\pi_e(Z_{t+1} \mid M_t=m_t)$ is defined as the probability distribution over the potential effect purview $Z_{t+1}$, given that the mechanism is in state $m_t$, and is obtained by \emph{causally} marginalizing out the random variables outside the mechanism and purview, under the assumption that the variables at $t+1$ are conditionally independent given all the variables at $t$  \cite{Albantakis2023IntegratedTerms, Barbosa2021MechanismInformation}. Having introduced the effect repertoire, the maximal effect state of $m_t$ within the purview $Z_{t+1}$ is defined as:
\begin{equation}
    \label{eq:purview_state_e_def}
    z'_e(m, Z) =\arg \max_{z \in \Omega_Z} \pi_e(z \mid m)\log_2\left(\frac{\pi_e(z \mid m)}{\pi_e(z;M)}\right) 
\end{equation}
where $\pi_e(z;M)$ is the unconstrained effect probability of state $z \in \Omega_{Z}$:
$$
\label{eq:unconstrained_effect}
    \pi_e(z;M) = |\Omega_{M}|^{-1} \sum_{m \in \Omega_{M}} \pi_e(z \mid M = m) , \quad z \in \Omega_{Z}.
$$
To avoid cluttering the notation, the time subscripts $t+1$ and $t$ are dropped here. Eq \eqref{eq:purview_state_e_def} gives us the state for which the mechanism $m$ makes the most difference compared to chance. Since there is at least one state $z \in \Omega_{Z}$ such that $\pi_e(z \mid m) > \pi_e(z;M)$, the maximal effect state will always be a state for which the mechanism increases the probability compared to the unconstrained probability. This is in line with our intuition that for a state to be caused by a mechanism, its probability needs to be increased by that mechanism. Given this maximal effect state, the integrated effect information of $m$ quantifies how much difference the mechanism makes to this state above and beyond its parts, by comparing the effect repertoire for state $z'_e$, $\pi_e(z'_e \mid m)$, with its partitioned version $\pi_e^{\theta'}(z'_e \mid m)$:
\begin{equation}
    \label{eq:purview_phi_e_def}
    \varphi_e(m,Z) = \pi_e(z'_e \mid m)
    \pospart{\log_2\left(\frac{\pi_e(z'_e \mid m)}{\pi_e^{\theta'}(z'_e \mid m)}\right)}.
\end{equation}
Here, $|.|_+$ represents the positive part operator, which sets the negative values to $0$. This reflects the intuition that the mechanism as a whole needs to increase the probability of its effect, compared to the mechanism's parts \footnote{In Appendix \ref{app:other_measures}, we show that our results hold, even if we replace the positive part operator with the absolute value operator.}. $\pi_e^{\theta}(Z\mid m)$ is the effect repertoire calculated after partitioning the mechanism-purview pair into independent parts using the partition $\theta \in \Theta(M,Z)$, $\Theta(M,Z)$ is the set of all the valid partitions, and $\theta'$ is the minimum information partition (MIP). $\varphi_e(m,Z)$ quantifies how much the mechanism as a whole increases the probability of the effect state $z'_e$ compared to the MIP, which is obtained as:
\begin{equation}
\label{eq:mip}
\theta' = \arg \min_{\theta \in \Theta(M,Z)}
\frac{1}{\mathcal{N}(\theta)}
\pi_e(z'_e \mid m)
\pospart{\log_2\left(\frac{\pi_e(z'_e \mid m)}{\pi_e^{\theta}(z'_e \mid m)}\right)}.
\end{equation}
$\mathcal{N}(\theta)$ is the maximum possible distance between $\pi_e(z'_e \mid m)$ and $\pi_e^{\theta}(z'_e \mid m)$ achievable by partition $\theta$ and its value will be derived shortly in Lemma \ref{lem:num_connections}. Normalizing the distance between $\pi_e(z'_e \mid m)$ and $\pi_e^{\theta}(z'_e \mid m)$ by $\mathcal{N}(\theta)$ makes the comparison between different partitions with different number of parts fair. This is because partitions that severe fewer causal connections tend to change the probability less. The MIP is the partition $\theta$ that makes the least difference to the effect repertoire Normalized by $\mathcal{N}(\theta)$. 

If there exists a partition for which the unpartitioned probability is less than or equal to the partitioned probability, the mechanism is reducible to its parts and the integrated effect information over purview $Z$ is $0$.
We can further find the most irreducible purview by finding 
$$
z_e^*(m) = \arg\max_{\{z'_e \mid Z \subseteq S\}} \varphi_e(m, Z = z'_e).
$$
$z_e^*(m)$ is the subset of units that mechanism $m$ as a whole makes the most difference to.
$\varphi_e(m) = \varphi_e(m, z_e^*(m))$ is the difference that $m$ as a whole makes to its most irreducible purview, which is referred to as the mechanism integrated effect information. Similar analysis and procedure can be used to define the cause repertoire $\pi_c(Z_{t-1} \mid M_t=m_t)$ and the integrated cause information $\varphi_c(m)$ \cite{Albantakis2023IntegratedTerms}. For a detailed description of the definitions for the cause side, see \nameref{S1_Appendix}. Finally, the overall integrated information of a mechanism is then defined as $\varphi(m) = \min \{ \varphi_e(m), \varphi_c(m) \}$. An irreducible mechanism with its maximal cause and effect purviews are referred to as a \emph{distinction}.

It is evident that the difference measure of the form
$p(z\mid m) 
\pospart{\log_2 ( \frac{p(z\mid m)}{q(z \mid m)})}$
plays a central role in measuring $\varphi$. This was derived from the postulates of IIT and was first introduced in \cite{Barbosa2020AInformation}. In short, this measure satisfies the following properties:
\begin{enumerate*}[label=(\roman*)]
    \item The measure differs from $0$ only if the probability of the state is increased.
    \item The measure is not an aggregate over all the states and reflects how much change is made in an individual state.
    \item In a scenario where $p(z_1 \cup z_2 \mid m) = p_1(z_1 \mid m )p_2(z_2 \mid m)$, $q(z_1 \cup z_2 \mid m) = q_1(z_1 \mid m )q_2(z_2 \mid m)$, and $p_2(z_2 \mid m ) = q_2(z_2 \mid m)$, the measure produces a smaller value for the purview $z_1 \cup z_2$ than only $z_1$. This scenario represents the case where $m$ makes no difference to a subset of the purview $z_2$, therefore having this subset in the maximal purview is discouraged.
\end{enumerate*}

We can look at this measure as the product of two terms. The first term, $p(z \mid m)$, is referred to as selectivity, while the second term,
$
\pospart{
\log_2 ( \frac{p(z \mid m)}{q(z \mid m)})
}
$,
is called informativeness. Adding new units can never increase the selectivity, therefore the measure is only increased if the new units increase the informativeness enough. 
Variations of this measure have been used to define integrated information of distinctions and systems \cite{Albantakis2023IntegratedTerms,Marshall2023SystemInformation}. However, the question of how large these measures can get remains open. In this section, we first study the maximum integrated information achievable by a single distinctions. Then we show why this upper bound cannot be achieved by all the distinctions of a system, by  studying a few important special cases. 

Our working assumption to derive these bounds is that the system is realizable by a TPM that is a product of unit TPMs (conditional independence). This is a minimal assumption as both the definition of $\varphi$ and causal marginalization process make use of such TPM \cite{Barbosa2021MechanismInformation, Albantakis2023IntegratedTerms, Marshall2023SystemInformation}. The conditional independence reflect the assumption that the state of the units only depend on the previous time step and there is no instantaneous causation. Furthermore, we consider systems consisting of binary units, but the results are generalizable to non-binary units as well.
In \nameref{S2_Appendix}, we show that our results still hold even if we use slightly different difference measures such as point-wise mutual information or Kullback–Leibler divergence.

\subsection{Single mechanism}
\label{subsec:single_mechanism}

Before discussing our main results, we first need to present a few helpful lemmas. Lemma \ref{lem:pi_ratio} states how the process of causal marginalization places a certain limit on the informativeness. Formally, given the set of units outside the mechanism $W = S-M$, the effect repertoire of a single unit $Z_i \in Z$ is defined as:
\begin{equation}
\label{eq:single_unit_pi_e}
\pi_e(Z_i \mid m) = \frac{1}{2^{|W|}} \sum_w p(Z_i \mid m, w),
\end{equation}
where $|W| = N - |M|$ is the number of units outside the mechanism. Furthermore, the effect repertoire of $Z$ is defined as:
\begin{equation}
\label{eq:pi_e}
\pi_e(Z \mid m ) = \prod_i \pi_e(Z_i \mid m).
\end{equation}
Using this definition, it can be shown that:

\begin{restatable}[]{Lemma}{piratiolem}
\label{lem:pi_ratio}
Given two mechanisms $\bar{M}$ and $M$, such that $\bar{M} \subset M \subseteq S$, and a single unit $Z_i$, we have: 
$$\pospart{ \log_2(\frac{\pi_e(Z_i=z_i \mid m)}{\pi_e(Z_i=z_i \mid \bar{m})}) } \leq |M| - |\bar{M}|.$$
\end{restatable}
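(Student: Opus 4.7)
The plan is to unfold the definition of the effect repertoire given in Eq \eqref{eq:single_unit_pi_e} and relate $\pi_e(Z_i=z_i \mid \bar{m})$ to $\pi_e(Z_i=z_i \mid m)$ by splitting the causal marginalization. Writing $\bar{W} = S - \bar{M}$, I would observe that $\bar{W}$ decomposes as $W \cup (M - \bar{M})$, so summing over $\bar{w} \in \Omega_{\bar{W}}$ can be rewritten as summing first over the states $m'$ of $M - \bar{M}$ and then over $w \in \Omega_W$. Letting $k = |M| - |\bar{M}|$, each choice of $m'$ together with the fixed $\bar{m}$ yields a full state of $M$, and exactly one of those $2^k$ completions equals the state $m$.

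Combined with Eq \eqref{eq:single_unit_pi_e}, this rewriting gives
\begin{equation*}
\pi_e(Z_i=z_i \mid \bar{m}) \;=\; \frac{1}{2^{k}} \sum_{m' \in \Omega_{M-\bar{M}}} \pi_e\bigl(Z_i=z_i \mid (\bar{m}, m')\bigr).
\end{equation*}
Since every term on the right is nonnegative and one of them is precisely $\pi_e(Z_i=z_i \mid m)$, the key inequality $\pi_e(Z_i=z_i \mid \bar{m}) \geq 2^{-k}\, \pi_e(Z_i=z_i \mid m)$ falls out immediately. Rearranging and taking $\log_2$ produces
\begin{equation*}
\log_2 \frac{\pi_e(Z_i=z_i \mid m)}{\pi_e(Z_i=z_i \mid \bar{m})} \;\leq\; k \;=\; |M| - |\bar{M}|,
\end{equation*}
and the positive part operator preserves this bound since the right-hand side is already nonnegative.

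I don't expect any real obstacle here: the lemma is essentially a counting statement about how many extensions of $\bar{m}$ to a state of $M$ exist, and once the marginalization is split the proof is one line. The only thing to be careful about is justifying the decomposition of the marginalization, which is legitimate because the variables outside $M$ and the variables in $M - \bar{M}$ are disjoint and are each marginalized uniformly by the causal marginalization procedure defined in Eq \eqref{eq:single_unit_pi_e}. This same factorization trick will likely be reused for the cause-side analogue via the argument outlined in \nameref{S1_Appendix}.
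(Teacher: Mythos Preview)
Your proposal is correct and follows essentially the same idea as the paper's proof: both exploit that the causal marginalization defining $\pi_e(Z_i=z_i\mid\bar{m})$ averages over a set of system states that contains the states used for $\pi_e(Z_i=z_i\mid m)$, with the normalizer differing by a factor of $2^{|M|-|\bar{M}|}$. The paper phrases this via the set inclusion $\mathcal{M}(m)\subset\mathcal{M}(\bar{m})$ and then bounds the ratio of sums, whereas you explicitly decompose the $\bar{W}$-marginalization into the $W$-marginalization plus the $2^{k}$ completions of $\bar{m}$ to a state of $M$; these are two presentations of the same counting argument.
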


All the proofs are provided in \nameref{S3_Appendix}. 
$|M| - |\bar{M}|$ is the number of units that need to be causally marginalized to calculate $\pi_e(Z_i=z_i \mid \bar{m})$ from $\pi_e(Z_i=z_i \mid m)$, which can also be thought of as the number of causal connections cut from the unit $Z_i$. 
Throughout this manuscript, cutting a (causal) connection between a unit at $t$ and a unit at $t+1$ refers to recalculating the conditional probability distribution of the output by causally marginalizing out the input unit.
Furthermore, this bound does not depend on the state of the purview unit $z_i$ and it holds for any state, not just the maximal state selected in Eq \eqref{eq:purview_state_e_def}.  
The result in Lemma \ref{lem:pi_ratio} can be generalized as:
\begin{restatable}[]{Lemma}{numconnectionslem}
\label{lem:num_connections}
Given a mechanism $M$ in state $m$, a purview $Z$ in state $z$, and a partition $\theta$, we have: 
$$\pospart{ \log_2(\frac{\pi_e(Z=z \mid m)}{\pi_e^\theta(Z=z \mid m)}) } \leq \mathcal{N}(\theta),$$
where $\mathcal{N}(\theta)$ is the total number of connections cut by the partition $\theta$.
\end{restatable}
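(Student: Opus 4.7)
The plan is to lift Lemma \ref{lem:pi_ratio} from a single purview unit and a single ``chopped-off'' subset to an arbitrary partition by exploiting the conditional-independence assumption on the TPM, which makes both the unpartitioned and the partitioned effect repertoires factor as products over the purview units $Z_i \in Z$.

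First I would write out the effect repertoires using Eq \eqref{eq:pi_e}, namely $\pi_e(Z=z\mid m)=\prod_i \pi_e(Z_i=z_i\mid m)$, and argue that for any valid partition $\theta$ the partitioned repertoire $\pi_e^\theta(Z=z\mid m)$ likewise factors as $\prod_i \pi_e(Z_i=z_i\mid \bar{m}_i)$, where $\bar{M}_i \subseteq M$ is the part of the mechanism that remains causally connected to $Z_i$ under $\theta$. This step is really just unfolding the definition of causal marginalization: cutting the connections prescribed by $\theta$ corresponds, for each purview unit $Z_i$ separately, to marginalizing out the $|M|-|\bar{M}_i|$ disconnected mechanism units in the formula \eqref{eq:single_unit_pi_e}.

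Next I would take logs to turn the ratio into a sum, giving
\[
\log_2\!\frac{\pi_e(Z=z\mid m)}{\pi_e^\theta(Z=z\mid m)}
\;=\; \sum_i \log_2\!\frac{\pi_e(Z_i=z_i\mid m)}{\pi_e(Z_i=z_i\mid \bar{m}_i)}.
\]
Applying the positive part and using the elementary fact $\pospart{\sum_i a_i} \le \sum_i \pospart{a_i}$, the bound reduces to proving each summand is at most $|M|-|\bar{M}_i|$. But that is exactly Lemma \ref{lem:pi_ratio} applied to the pair $(\bar{M}_i, M)$ and the single unit $Z_i$. Summing over $i$ yields $\sum_i (|M|-|\bar{M}_i|)$, which by construction is the total number of causal connections severed by $\theta$, i.e.\ $\mathcal{N}(\theta)$.

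The main obstacle I anticipate is not the arithmetic but the bookkeeping in the first step: one has to make sure that ``the partitioned effect repertoire factors over purview units with $Z_i$ seeing only the mechanism units still linked to it'' is the right operational reading of a partition in the IIT formalism used here, including partitions with more than two parts and partitions that cut connections asymmetrically between mechanism and purview. Once that factorization is cleanly established from the TPM product assumption and the definition of causal marginalization, the rest is a direct application of Lemma \ref{lem:pi_ratio} termwise plus subadditivity of the positive part.
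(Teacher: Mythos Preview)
Your proposal is correct and follows essentially the same route as the paper: factor both repertoires over purview units via conditional independence, take logs, and apply Lemma~\ref{lem:pi_ratio} termwise to get $\sum_i(|M|-|\bar{M}_i|)=\mathcal{N}(\theta)$. The only cosmetic difference is that the paper bounds the raw log-ratio directly (since Lemma~\ref{lem:pi_ratio}'s proof actually gives $\log_2(\cdot)\le |M|-|\bar M|$, not just its positive part), whereas you insert the subadditivity step $\pospart{\sum_i a_i}\le\sum_i\pospart{a_i}$; both are fine.
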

This result is general in the sense that it holds for any partitioning that removes an arbitrary subset of connections, even if it is not a valid partition and does not divide the mechanism-purview pair into independent parts. Therefore, it does not depend on the constraints imposed on the partitions. Lemma \ref{lem:pi_ratio} and Lemma \ref{lem:num_connections} establish a connection between the number of connections severed by a partition and the value of the integrated information. This can help us to develop a more intuitive understanding of mechanism integrated information. In words, the mechanism integrated information, as defined in \cite{Albantakis2023IntegratedTerms}, counts the number of causal connections that need to be severed to disintegrate the mechanism into causally independent parts.

As shown in Eq \eqref{eq:mip}, to find the partition that makes the least difference, we normalize the difference between the unpartitioned and partitioned repertoires by $\mathcal{N}(\theta)$. This makes the comparison between the partitions that sever different number of connections fair.
Using Lemma \ref{lem:num_connections} and the fact that the informativeness term is a bound for the integrated information, we can readily state our first main result. Theorem \ref{thm:single_mechanism} states that the integrated information of mechanism $M$ over a purview cannot be larger than the total number of potential connections between them. In other words, to disintegrate a maximal integrated mechanism we need to severe \emph{all} causal connections between the mechanism and the purview. This holds both for the cause and the effect sides. 

\begin{restatable}{Theorem}{singlemechthm}
\label{thm:single_mechanism}
For a mechanism $M \subseteq S$ in state $m$, a candidate cause purview $C$, and a candidate effect purview $E$, we have:
$$
\begin{aligned}
\varphi_e(m, E) \leq |M||E| \quad \text{and} \quad
\varphi_c(m, C) \leq |M||C|,
\end{aligned}
$$
where $|E|$ and $|C|$ denote the size of the candidate effect and cause purviews, respectively. 
\end{restatable}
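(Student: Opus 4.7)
The plan is to combine the direct bound on the informativeness ratio from Lemma \ref{lem:num_connections} with the trivial bound on the selectivity factor, and then to observe that the number of connections cut by any admissible partition is at most the total number of potential connections between the mechanism and the purview.

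First I would unpack the definition of $\varphi_e(m,E)$ in Eq \eqref{eq:purview_phi_e_def} as the product of a selectivity term $\pi_e(z'_e \mid m)$ and an informativeness term $\pospart{\log_2(\pi_e(z'_e \mid m)/\pi_e^{\theta'}(z'_e \mid m))}$, where $\theta'$ is the MIP defined by Eq \eqref{eq:mip}. Since $\pi_e(z'_e \mid m) \in [0,1]$ as a probability, the selectivity contributes a factor of at most $1$, so it suffices to bound the informativeness. Applying Lemma \ref{lem:num_connections} at the specific partition $\theta = \theta'$ and at the state $z = z'_e$ gives
$$
\pospart{\log_2\!\left(\frac{\pi_e(z'_e \mid m)}{\pi_e^{\theta'}(z'_e \mid m)}\right)} \leq \mathcal{N}(\theta').
$$

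Next I would argue that $\mathcal{N}(\theta') \leq |M||E|$. The normalization $\mathcal{N}(\theta)$ counts the causal connections cut by $\theta$ between units of $M$ at time $t$ and units of the purview $E$ at time $t+1$. The set of all such potential connections has cardinality exactly $|M||E|$, and any partition $\theta \in \Theta(M,E)$ can cut at most this many connections. Since $\theta' \in \Theta(M,E)$, this bound applies. Chaining the inequalities yields
$$
\varphi_e(m,E) \leq 1 \cdot \mathcal{N}(\theta') \leq |M||E|,
$$
as claimed. For the cause-side inequality $\varphi_c(m,C) \leq |M||C|$, I would invoke the symmetric versions of Lemma \ref{lem:pi_ratio} and Lemma \ref{lem:num_connections} that hold for the cause repertoire (as described in \nameref{S1_Appendix}) and repeat the same three-step chain verbatim with $E$ replaced by $C$.

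There is not really a hard step here once Lemma \ref{lem:num_connections} is in hand; the only point that warrants care is verifying that the partitions considered in the MIP optimization indeed act on connections between $M$ and $E$ only, so that the combinatorial bound $\mathcal{N}(\theta) \leq |M||E|$ is tight in the right sense. This is immediate from the fact that causal marginalization, and hence partitioning, operates exactly on the $|M| \times |E|$ grid of potential directed links between the mechanism units and the purview units.
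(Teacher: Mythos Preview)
Your proposal is correct and matches the paper's proof essentially step for step: bound the selectivity by $1$, apply Lemma~\ref{lem:num_connections} to the informativeness, and then bound $\mathcal{N}(\theta)$ by the total number of potential connections $|M||E|$ (respectively $|M||C|$). The only cosmetic difference is that the paper inserts an intermediate $\max_\theta$ before invoking Lemma~\ref{lem:num_connections}, whereas you apply the lemma directly at $\theta'$; your version is in fact slightly more streamlined.
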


This bound is achievable and the conditions to achieve it are presented in the proof. Theorem \ref{thm:single_mechanism} provides us with our first upper bound for the sum of integrated information of all the distinctions of a system:
\begin{equation}
\label{eq:sum_bound_K_nonunique}
\sum_{M \subseteq S} \varphi(m)
= \sum_{M \subseteq S} \min \{ \varphi_c(m), \varphi_e(m) \} 
\leq \sum_{M \subseteq S} |M| N
= N \sum_{|M| = 1}^N |M| \binom{N}{|M|}
 = \frac{N^2}{2} 2^{N}.
\end{equation}
And, if we are interested in the upper bound over unique purviews:
\begin{equation}
\label{eq:sum_bound_K}
\sum_{M \subseteq S} \varphi(m)
\leq \sum_{M \subseteq S} |M||Z^*(m)|
\overset{(a)}{\leq} \sum_{M \subseteq S} |M|^2
= \sum_{|M| = 1}^N |M|^2 \binom{N}{|M|}
 = \frac{N(N+1)}{4} 2^{N}.
\end{equation}
Inequality (a) follows from the fact that for any two sets $S_1$ and $S_2$, $|S_1|^2 + |S_2|^2 \geq |S_1||S_2| + |S_2||S_1|$. This means that in a scenario where each purview can be assigned to only one mechanism, matching the sizes of the mechanisms and purviews maximizes the upper bound. This scenario is an important special case, as it was studied in \cite{Haun2019WhyExperience} to explain how IIT can be used to account for the quality of certain type of experiences, such as the spatial experience. Both of these bounds consist of a quadratic term and an exponential term. This is because the total number of connections in a system grows quadratically and the number of subsets/mechanisms grows exponentially with the number of units in the system. This also emphasizes the fact that the notion of integrated information is fundamentally different from the Shannon information encoded in the state of the system. Shannon information of a system consisting of $N$ binary variables can at most be $N$. In the next section, we will show due to the constraints imposed by overlapping distinctions of different sizes, also referred to as mechanisms of different orders, these bounds are not achievable.

It is also worthwhile to mention that the results provided here for a single mechanism are in line with the bounds for the system integrated information defined in \cite{Marshall2023SystemInformation}. As shown in Theorem 1 in \cite{Marshall2023SystemInformation}, the maximum achievable system integrated information for a given partition is the number of connections cut by that partition. This makes the overall maximum system integrated information the maximum number of connections cut by any valid cut, which is $|S|(|S|-1)$ for the system partitions considered in \cite{Marshall2023SystemInformation}. 

\subsection{Inter-order constraints}
\label{subsec:multiple_mechanisms}
In this section, we discuss how mechanisms that are either subsets or supersets of each other place certain constraints on the maximum integrated information achievable by them. We refer to this set of constraints as inter-order constraints, as they are resulted from interactions among mechanisms of different order/size. The results presented in this section formalize another property of integrated information: the tension between the parts and the whole. We will show that either the whole or the parts can be maximally integrated. As already shown in Lemma \ref{lem:pi_ratio}, the levels of determinism for a single purview unit given two different mechanisms are bounded by each other, if one mechanism is a subset of the other. It can also be shown that if a mechanism $M$ fully specifies its cause purview, \emph{i.e.,} $\pi_c(Z=z^* \mid m ) = 1$, then any superset of that mechanism $\bar{M} \supseteq M$ also fully specifies the same purview, $\pi_c(Z\mid \bar{m}) = 1$. The same holds for the effect side as well. 
\begin{restatable}[]{Lemma}{supersetdetlem}
\label{lem:superset_det}
Superset of a deterministic mechanism is deterministic. For two mechanisms $M \subset \bar{M} \subseteq S$ and a single-unit purview $Z_i$,
If $\pi_e(Z_i=z_i \mid m )=1$, then $\pi_e(Z_i=z_i \mid \bar{m})=1$ and if $\pi_e(Z_i=z_i \mid m )=0$, then $\pi_e(Z_i=z_i \mid \bar{m})=0$.
\end{restatable}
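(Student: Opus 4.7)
The plan is to unfold Eq \eqref{eq:single_unit_pi_e} and reduce the claim to a pointwise statement about the underlying conditional probabilities. First I would set $W = S - M$ and $\bar{W} = S - \bar{M}$, and observe that since $M \subset \bar{M}$, the set $D = \bar{M} \setminus M$ is nonempty and $W = \bar{W} \cup D$ is a disjoint union. Because $m$ and $\bar{m}$ come from restricting the same global system state in the IIT setup, $\bar{m}$ extends $m$, so I would write $\bar{m} = (m, d^\star)$ where $d^\star$ is the state of $D$ that $\bar{m}$ specifies.

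The heart of the argument is the observation that $\pi_e(Z_i = z_i \mid m)$ is the uniform average of the values $p(Z_i = z_i \mid m, w)$ over $w \in \Omega_W$, and each summand lies in $[0,1]$. Therefore the average equals $1$ only if every summand equals $1$, and it equals $0$ only if every summand equals $0$. I would then split the average over $W$ into an iterated sum over $D$ followed by $\bar{W}$ and restrict attention to the slice $d = d^\star$. On that slice, $p(Z_i = z_i \mid m, d^\star, \bar{w}) = p(Z_i = z_i \mid \bar{m}, \bar{w})$ by the consistency of $\bar{m}$ with $m$, and by the pointwise statement each such term is $1$ (resp.\ $0$). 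Averaging over $\bar{w} \in \Omega_{\bar{W}}$ then yields $\pi_e(Z_i = z_i \mid \bar{m}) = 1$ (resp.\ $0$). The cause side follows identically from the analogous marginalization formula in \nameref{S1_Appendix}.

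I do not foresee a real obstacle here: the argument is essentially a one-line remark about averages of $[0,1]$-valued quantities, and the only nontrivial bookkeeping is ensuring that $d^\star$ is consistently extracted from $\bar{m}$ so that the appropriate slice of the larger marginalization lines up with the smaller one. In particular, no appeal to Lemma \ref{lem:pi_ratio} or any deeper property of the TPM is needed, since the claim concerns only the degenerate boundary cases of $\pi_e$.
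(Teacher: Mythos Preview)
Your proposal is correct and follows essentially the same route as the paper: both arguments observe that a uniform average of $[0,1]$-valued terms equals $1$ (resp.\ $0$) only if every term does, and then note that the states summed over for $\bar{m}$ form a subset of those summed over for $m$. The paper packages these two observations as Lemmas~\ref{lem:det_mech} and~\ref{lem:M_subset} using the $\mathcal{M}(m)$ notation, whereas you work directly with Eq~\eqref{eq:single_unit_pi_e} and spell out the subset relation via the decomposition $W = \bar{W} \cup D$ and the slice $d = d^\star$; the content is identical.
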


Such constraints can be exploited to show that the upper bound can only be achieved by a subset of the mechanisms in the system. Another important fact to keep in mind is that $\varphi_e(m, Z=z) = |M||Z|$ is only achievable when $\pi_e(Z=z \mid m) = 1$. This is because this bound is derived for the informativeness term and is only achievable when the selectivity term, $\pi_e(Z=z \mid m)$, is $1$:
\begin{restatable}[]{Lemma}{pionenecessary}
\label{lem:pi_1_necessary}
If $\varphi_e(m, Z) = |M||Z|$ then $\pi_e(Z=z \mid m) = 1$. Similarly, if $\varphi_c(m, Z) = |M||Z|$ then $\pi_c(Z=z \mid m) = 1$. 
\end{restatable}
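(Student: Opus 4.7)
The plan is to exploit the fact that Eq~\eqref{eq:purview_phi_e_def} writes $\varphi_e(m,Z)$ as a product of two nonnegative factors, a selectivity $\pi_e(z'_e\mid m)$ and an informativeness $\pospart{\log_2(\pi_e(z'_e\mid m)/\pi_e^{\theta'}(z'_e\mid m))}$, and then to bound each factor separately. Since the selectivity is a probability it lies in $[0,1]$, so the whole product cannot reach $|M||Z|$ unless the informativeness alone is at least $|M||Z|$.

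First I would bound the informativeness. By Lemma~\ref{lem:num_connections}, the informativeness evaluated at the MIP $\theta'$ is at most $\mathcal{N}(\theta')$, the number of connections $\theta'$ severs. Since $\theta'$ is a partition of the mechanism--purview pair $(M,Z)$, at most every connection between $M$ and $Z$ is cut, giving $\mathcal{N}(\theta')\leq |M||Z|$. Combining these two observations yields $\varphi_e(m,Z)\leq \pi_e(z'_e\mid m)\cdot|M||Z|\leq |M||Z|$, which also re-derives the bound of Theorem~\ref{thm:single_mechanism} but more importantly identifies exactly when equality holds.

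Next I would read off the equality conditions. Writing $\varphi_e(m,Z)=A\cdot B$ with $0\leq A\leq 1$ and $0\leq B\leq |M||Z|$, the product can equal $|M||Z|$ only when both $A=1$ and $B=|M||Z|$. The condition $A=1$ is precisely $\pi_e(z'_e\mid m)=1$, i.e.\ the mechanism $m$ fully determines the purview state $z=z'_e$, which is the desired conclusion for the effect side.

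Finally I would invoke the symmetry between cause and effect noted after Eq~\eqref{eq:mip} (and detailed in \nameref{S1_Appendix}): replacing $\pi_e$, $\theta'$, and $\mathcal{N}$ by their cause-side analogues, the same two-factor decomposition and the same bounds hold, giving $\varphi_c(m,Z)=|M||Z|\Rightarrow \pi_c(Z=z\mid m)=1$. No step here is delicate; the only thing to be careful about is asserting $\mathcal{N}(\theta')\leq |M||Z|$, which is simply the observation that the total number of potential $M$-to-$Z$ connections is $|M||Z|$ and a partition of $(M,Z)$ can cut no more than all of them.
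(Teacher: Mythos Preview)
Your proposal is correct and follows essentially the same approach as the paper: both arguments observe that $\varphi_e(m,Z)$ factors as selectivity (bounded by $1$) times informativeness (bounded by $|M||Z|$ via Lemma~\ref{lem:num_connections}), so equality forces the selectivity to be $1$. The paper phrases this as ``using the same line of proof as Theorem~\ref{thm:single_mechanism} with $\pi_e<1$ yields a strict inequality,'' but the content is identical to your two-factor equality analysis.
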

This is a very strict constraint if we want to construct a system such that $\varphi_e(m) = \varphi_e(m, z_e^*(m)) = |M||Z_e^*|$ for all the mechanisms. In what follows, we show that even having a single mechanism-purview pair that achieves this upper bound makes it impossible for a significant number of other mechanism-purview pairs to achieve their maximum integrated information. 
\begin{restatable}{Theorem}{effectinterlevel}
\label{thm:effect_interlevel}
$\varphi_e(\bar{m},\bar{Z}) < |\bar{M}||\bar{Z}|$, if $\varphi_e(m, Z) = |M||Z|$ and
\begin{itemize}
    \item $\bar{M} \subset M$ and $Z \cap \bar{Z} \neq \varnothing$, OR
    \item $\bar{M} \supset M$ and $Z \cap \bar{Z} \neq \varnothing$.
\end{itemize}
\end{restatable}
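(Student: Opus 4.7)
The plan is to argue by contradiction. I assume, in addition to the hypothesis $\varphi_e(m,Z)=|M||Z|$, that $\varphi_e(\bar m,\bar Z)=|\bar M||\bar Z|$, and derive a pointwise probability conflict. The key preliminary is a sharp iff-characterization: $\varphi_e(m,Z)=|M||Z|$ is equivalent, for every $Z_i\in Z$, to $p(Z_i=z_i\mid s)$ being the indicator of $\{s:s|_M=m\}$. The ``$=1$'' direction follows from Lemma \ref{lem:pi_1_necessary}, the factorization Eq \eqref{eq:pi_e}, and the average structure in Eq \eqref{eq:single_unit_pi_e} (a uniform average of values in $[0,1]$ equals $1$ only if all terms equal $1$). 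For ``$=0$'', saturation forces the MIP in Eq \eqref{eq:mip} to be the full cut with $\pi_e^{\theta'}(z\mid m)=2^{-|M||Z|}$; applying Lemma \ref{lem:num_connections} and MIP-minimality to the partition that severs any non-empty subset $U\subseteq M$ of connections into a single $Z_i$ forces $\pi_e^\theta(Z_i=z_i\mid m)=2^{-|U|}$, and expanding this as a $2^{|U|}$-fold average---one of whose terms is already $1$---kills every non-$m$ term. The same characterization holds for $(\bar m,\bar Z)$ and each $\bar Z_j$.

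Fix a shared unit $Z_i\in Z\cap\bar Z$ and set $V=M\cap\bar M$, which equals $\bar M$ in Case 1 and $M$ in Case 2. Evaluating both indicators at any $s$ with $s|_M=m$ pins $\bar z_i$: $\bar z_i=z_i$ when $\bar m|_V=m|_V$ (compatible), and $\bar z_i=\neg z_i$ when $\bar m|_V\neq m|_V$ (incompatible). In the compatible sub-case I build a conflict state $s^\star$ satisfying exactly one of the two indicators---in Case 1 by setting $s^\star|_{\bar M}=\bar m$ and toggling a bit of $M-\bar M$ off of $m$, in Case 2 by setting $s^\star|_M=m$ and toggling a bit of $\bar M-M$ off of $\bar m$---forcing $p(Z_i=z_i\mid s^\star)$ to be simultaneously $0$ and $1$. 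In the incompatible sub-case I build $s^\star$ satisfying neither indicator---toggling a bit of $M-\bar M$ off of $m$ after fixing $s^\star|_{\bar M}=m|_{\bar M}$ in Case 1, or toggling a bit of $\bar M-M$ off of $\bar m$ after fixing $s^\star|_{\bar M}|_M=\bar m|_M$ in Case 2---so that both $p(Z_i=z_i\mid s^\star)$ and $p(Z_i=\neg z_i\mid s^\star)$ vanish, violating the binary normalization $p(Z_i=0\mid s^\star)+p(Z_i=1\mid s^\star)=1$.

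The main obstacle is the preliminary iff-characterization: Lemma \ref{lem:pi_1_necessary} only gives the ``$=1$'' direction, and extracting the ``$=0$'' direction requires the MIP-minimality together with the flexible partitioning convention highlighted by the remark after Lemma \ref{lem:num_connections} (which allows partitions that sever an arbitrary subset of connections into a single purview unit). Once this is in hand, the case analysis is elementary bookkeeping that repeatedly uses strictness of $\bar M\subsetneq M$ or $\bar M\supsetneq M$ to guarantee at least one free bit in $M\bigtriangleup\bar M$ for the construction of $s^\star$.
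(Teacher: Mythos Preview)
Your argument is correct and genuinely different from the paper's. The paper never establishes the full indicator characterization; it uses only the ``$=1$'' half to conclude $\pi_e(Z_i=z_i)\geq 2^{-|M|}$ for each $Z_i\in Z$, then bounds the informativeness of the \emph{complete} partition of $(\bar m,\bar Z)$ directly: for units $Z_i\in\bar Z\cap Z$ the per-unit informativeness is at most $|M|<|\bar M|$, yielding $\varphi_e(\bar m,\bar Z)<|\bar M||\bar Z|$ in the superset case (Lemma~\ref{lem:inter_level_upward}); the subset case then follows by the same contradiction you use (Lemma~\ref{lem:inter_level_downward}). Your route instead pushes the hypothesis all the way down to the TPM entries, obtaining $p(Z_i=z_i\mid s)=\mathbb{1}[s|_M=m]$, and then reads off a pointwise clash between two incompatible indicator functions on the same binary output. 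The paper's argument is shorter and avoids the ``$=0$'' direction altogether; your argument extracts a sharper structural statement about maximally integrated mechanism--purview pairs and treats the two bullets symmetrically rather than reducing one to the other.

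One caution about your preliminary step. Your derivation of the ``$=0$'' direction invokes MIP-minimality at the partition that severs an arbitrary subset $U\subseteq M$ of connections into a single $Z_i$, and you cite the remark after Lemma~\ref{lem:num_connections} as licensing this. That remark only says the \emph{bound} $\mathcal{N}(\theta)$ holds for such generalized cuts; it does not place them inside $\Theta(M,Z)$, so MIP-minimality need not apply to them. Fortunately you do not need this detour. Saturation already forces the MIP to be the complete partition with $\prod_i \pi_e(Z_i=z_i)=2^{-|M||Z|}$; combined with the per-unit inequality $\pi_e(Z_i=z_i)\geq 2^{-|M|}$ from the ``$=1$'' half, each factor must equal $2^{-|M|}$, and since the $2^{N-|M|}$ terms with $s|_M=m$ already contribute a total of $2^{N-|M|}$ to $\sum_s p(Z_i=z_i\mid s)=2^{N-|M|}$, every remaining term vanishes. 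This gives your indicator characterization cleanly from the complete partition alone, after which your case analysis goes through as written.
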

 Theorem \ref{thm:effect_interlevel} states that any subset or superset of $M$ cannot share purview units with $M$ and still achieve the maximum, if $\varphi_e(m ,Z) = |M||Z|$. Such mechanisms can only achieve their maximal integrated information over disjoint purviews.
This makes it immediately clear why all the distinctions in a system larger than $1$ unit cannot be maximally irreducible, as there are fewer disjoint purview sets than all the mechanisms.
As discussed in \nameref{S1_Appendix}, the same results hold for the integrated cause information as well. 
This makes the bounds in \eqref{eq:sum_bound_K_nonunique} and \eqref{eq:sum_bound_K} not achievable.
\begin{Corollary}
\label{cor:all_mech_effect}
All the distinctions in a system cannot be maximally integrated, if the system is composed of more than one unit.
\end{Corollary}
 This is in line with our intuitive definition of integrated information. $\varphi$ captures the difference that a mechanism makes over and beyond its parts. Therefore, if parts (subsets) of a mechanism-purview pair are maximally irreducible, the pair itself cannot be. 
For example, if there exist a mechanism in the system that achieves maximum $\varphi_e$ over the entire system, i.e. $|M||Z| = |M|N$, none of the subsets or supersets of that mechanism can achieve maximum $\varphi_e$ over their corresponding purviews. 
Another special case is when every single-unit mechanism achieves $\varphi_e=1$ over itself. Again, this makes it impossible for all the other distinctions to have maximum $\varphi_e$. In fact, in this case, the integrated effect information for the rest of the distinctions will be $0$, as any mechanism-purview pair is reducible to its parts.

\subsection{Intra-order constraints}
\label{subsec:selectivity_1}
As outlined in the preceding sections, having a high selectivity is necessary to have a large integrated information. For example, Lemma \ref{lem:pi_1_necessary} states that having selectivity of $1$ is necessary to achieve the maximal integrated effect and cause information.
This motivates us to study a special case that is particularly important. Assume that in a system consisting of $N$ units, all the mechanisms of size $K$ specify themselves with probability $1$, \emph{i.e.,} $\pi_e(Z=z' \mid m ) = 1$ where $Z = M$. 
Since the  mechanisms of size $K$ are not subsets or supersets of each other, Theorem \ref{thm:effect_interlevel} does not hold. However, even in this setting, the \emph{intra-order} constraints prevent the distinctions from achieving the maximum integrated information.
\begin{restatable}[]{Theorem}{orderkdet}
\label{thm:order_k_fully_det_effect}
In a system $S$ consisting of $N$ units, for a given mechanism size $1 < K < N$,
if $\pi_e(Z=z' \mid m ) = 1, \forall M: |M|=K$, 
and the purview units are the same as the mechanism units, \emph{i.e.,} $Z = M$, none of the mechanisms with $|M|=K$ can achieve their maximum integrated effect information of $|M||Z|=|M|^2$.
\end{restatable}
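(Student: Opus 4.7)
The plan is to show that $\varphi_e(m,M)=K^2$ imposes such a rigid constraint on the transition probability matrix that it conflicts with the hypothesis that every other size-$K$ mechanism fully specifies itself. The first move exploits the normalization in the MIP definition. By Lemma \ref{lem:pi_1_necessary}, $\varphi_e(m,M)=K^2$ already forces $\pi_e(z'\mid m)=1$, so $\varphi_e$ equals the informativeness at the MIP $\theta'$; combined with Lemma \ref{lem:num_connections}, this gives $\mathcal{N}(\theta')=K^2$ and a normalized score of $1$ in Eq \eqref{eq:mip}. Because the MIP is the \emph{minimizer} of that normalized score while Lemma \ref{lem:num_connections} caps the score at $1$ for every partition, \emph{every} $\theta\in\Theta(M,Z)$ must in fact saturate: $\pi_e^\theta(z'\mid m)=2^{-\mathcal{N}(\theta)}$.

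Next, I would apply this universal saturation to the partition $\theta_j=\{(M,Z\setminus\{Z_j\}),(\varnothing,\{Z_j\})\}$ that isolates a single purview unit $Z_j$ by cutting the $K$ edges feeding into it. Saturation gives $\pi_e^{\theta_j}(Z_j=z'_j\mid m)=2^{-K}$. Because the $(\varnothing,\{Z_j\})$ part causally marginalizes \emph{all} system units into $Z_j$, this equals the fully unconstrained marginal $2^{-N}\sum_{\sigma}p(Z_j=z'_j\mid\sigma)$; combining with $\pi_e(Z_j\mid m)=1$, which already forces $\sum_w p(Z_j=z'_j\mid m,w)=2^{N-K}$, I would conclude $\sum_{\sigma:\sigma_M\neq m}p(Z_j=z'_j\mid\sigma)=0$, hence $p(Z_j=z'_j\mid\sigma)=0$ for every system state $\sigma$ whose restriction to $M$ differs from $m$.

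For the contradiction, using $1<K<N$, pick $u_k\in M\setminus\{Z_j\}$ and $u_{\mathrm{ext}}\in S\setminus M$ and set $M'=(M\setminus\{u_k\})\cup\{u_{\mathrm{ext}}\}$; this is a size-$K$ mechanism containing $Z_j$ and agreeing with $M$ on $M\cap M'$. Let $m'$ be its state in the current system state. Splitting the causal marginalization in Eq \eqref{eq:single_unit_pi_e} over $\sigma_{u_k}$: among system states with $\sigma_{M'}=m'$, the half with $\sigma_{u_k}=m_{u_k}$ satisfy $\sigma_M=m$ and give $p(Z_j=z'_j\mid\sigma)=1$, while the other half satisfy $\sigma_M\neq m$ and give $p(Z_j=z'_j\mid\sigma)=0$. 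Averaging yields $\pi_e(Z_j=z'_j\mid m')=1/2$, which is incompatible with $Z_j$ being fully determined by $m'$ and therefore contradicts the hypothesis that $M'$ specifies itself.

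The step I expect to demand the most care is the first: pushing ``$\varphi_e$ achieves its informativeness ceiling at the MIP'' into ``\emph{every} valid partition saturates'' rests on the normalization in Eq \eqref{eq:mip} and on the isolating partition $\theta_j$ actually belonging to $\Theta(M,Z)$. Once that is in place the remainder is routine bookkeeping with marginal sums, and the cause-side analogue follows verbatim from the symmetric versions of Lemmas \ref{lem:num_connections} and \ref{lem:pi_1_necessary}.
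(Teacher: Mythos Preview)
Your argument is correct and takes a genuinely different route from the paper.

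The paper argues \emph{constructively}: starting from the hypothesis, it characterizes the TPM (Eq~\eqref{eq:construction}), computes closed-form lower bounds for $\pi(|\bar M|)$ and $\bar\pi(|\bar M|)$, and then shows that the per-connection information gain is exactly~$1$ for self connections but strictly below~$1$ for lateral connections when $1<K<N$. A partition that avoids self connections therefore has strictly smaller normalized score than the complete partition, so the MIP cannot be the complete partition and $\varphi_e<K^2$. This longer path pays dividends the paper actually uses later: it yields the optimal TPM construction and narrows the MIP search to $\tfrac{K}{2}+1$ candidates, which feeds directly into the numerical experiments in Section~\ref{sec:experiments}.

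Your proof is instead a clean contradiction via \emph{universal saturation}: if $\varphi_e=K^2$ then the MIP's normalized score is~$1$, and since every partition is sandwiched between the MIP (by minimality) and~$1$ (by Lemma~\ref{lem:num_connections}), every $\theta\in\Theta(M,Z)$ saturates. Applying this to the isolating partition $\theta_j$ forces $p(Z_j=z'_j\mid\sigma)=0$ whenever $\sigma_M\neq m$, which immediately conflicts with the hypothesis at the neighboring mechanism $M'$. The only point to be explicit about is that $\pi_e(Z_j=z'_j\mid m')=\tfrac12$ contradicts $\pi_e(M'=z'(m')\mid m')=1$ \emph{regardless} of which state $z'(m')$ assigns to $Z_j$, since a binary unit with marginal~$\tfrac12$ is determined in neither state; you gesture at this but it is worth spelling out. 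Your approach is shorter and requires none of the combinatorial bookkeeping, at the cost of not delivering the TPM construction or the MIP characterization that the paper exploits downstream.
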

The theorem states that if all the mechanisms of size $K$ fully specify themselves, their $\varphi$ cannot achieve its maximum. The only exceptions are when there is only one such mechanism, $K = N$, or there is no overlap between the mechanisms $K = 1$. In \nameref{S1_Appendix}, we show that this result holds for the integrated cause information as well.

Analyzing this setup helps us to understand how mechanisms that are not parts of each other, but share parts, constrain each other.
The proof for Theorem \ref{thm:order_k_fully_det_effect}, which is provided in \nameref{S3_Appendix}, is a constructive proof. Thus, it also provides us with a procedure to construct a TPM that can achieve the maximum possible integrated information in this setting. The main idea behind the proof, and therefore the construction, is to use the definition of the effect repertoire in \eqref{eq:single_unit_pi_e} and \eqref{eq:pi_e} to characterize the TPM that can satisfy the assumptions of the theorem. This both gives us the TPM and limits the maximum attainable integrated effect information. 

In \nameref{S3_Appendix}, we also show that the MIP for the system under consideration can be found by evaluating only $\frac{K}{2} + 1$ partitions, significantly reducing the computations and making it feasible to calculate the integrated information of such distinctions in larger networks. In section \ref{sec:experiments}, we show that the integrated effect information in this system is much less than the number of connections by numerically evaluating the $\frac{K}{2} + 1$ candidate partitions. We can employ this linear time numerical evaluation to calculate the sum of integrated effect information over all the mechanisms in a hypothetical system where all the mechanisms of any size have themselves as the most irreducible purview with selectivity $1$ (not just all the mechanisms of size $K$). 
This gives us a numerical upper bound for reflexive systems, \emph{i.e.,} systems in which every mechanism has itself as the purview.
We will provide experiments to show how this computationally light numerical bound is tight and may in fact hold in general. 

\section{Upper bounds for relation integrated information}
\label{sec:relation_bounds}
Any mechanism $m$ with its maximally irreducible cause and effect purviews defines a causal distinction $d(m) = (z^*(m), \varphi_d(m) )$, where $
z^*(m) = \left( z^*_c(m), z^*_e(m) \right)$ contains the maximally irreducible cause and effect purviews in their maximal states. 

Causal relations are defined over subsets of the causal distinctions. 
If we define the set of all the valid distinctions of a candidate system as
$\mathcal{D}$,
a subset of distinctions $\boldsymbol{d} \subseteq \mathcal{D}$ forms a relation if the cause purview, the effect purview, or both the cause and the effect purviews of each distinction overlap over the same units in the same states\footnote{We use bold face to represent sets of distinctions, e.g., $\boldsymbol{d}$. $|\boldsymbol{d}|$ is the number of distinctions in the relation.}. This definition includes the special case of a self-relation where the cause and effect purviews of the same distinction overlap congruently over a set of units. This means a system consisting of $N$ units can potentially contain as many as $2^{2^N-1} - 1$ causal relations. Any nonempty subset of units can define a distinction and any nonempty subset of distinctions can potentially define a relation. Here we discuss the bound on the integrated information of a single relation and the sum of relations' integrated information, given a bound on the sum of distinctions' integrated information. This makes the analysis in this section mostly independent of the results in the previous sections, as the final results holds for any bound on the sum of distinctions' integrated information.

In \cite{Albantakis2023IntegratedTerms}, it is shown that we can write the relation integrated information of a set of distinctions $\boldsymbol{d} \subseteq \mathcal{D}$ with $|\boldsymbol{d}| \geq 2$ as:

\begin{equation}
\label{eq:relation_phi_d2}
\varphi_r(\boldsymbol{d}) =
\left|
\bigcap_{(z^*, \varphi_d ) \in \boldsymbol{d}} 
\big(z^*_c \cup z^*_e\big) 
\right | 
\min_{(z^*, \varphi_d ) \in \boldsymbol{d}} 
\frac{\varphi_d}{|z^*_c \cup z^*_e|}.
\end{equation}

Both the intersection and the union operators consider both the units in the purview and the state of units. If there is no congruent overlap among the purviews of the distinctions, $\bigcap_{(z^*, \varphi_d ) \in \boldsymbol{d}} \big(z^*_c \cup z^*_e\big)$ and  $\varphi_r(\boldsymbol{d})$ would be zero. For the special case of self-relations, $|\boldsymbol{d}| = 1$, the relation integrated information of a distinction $d$ is calculated as:
\begin{equation}
\label{eq:self_relation_phi}
\varphi_r(d) =
\left|
z^*_c \cap z^*_e
\right |  
\frac{\varphi_d}{|z^*_c \cup z^*_e|}.
\end{equation}
Our first observation is that the relation integrated information of any subset of distinctions $\boldsymbol{d}$ cannot be larger than the smallest distinction's integrated information
$$
\varphi_r(\boldsymbol{d}) \leq \min_{(z^*, \varphi_d ) \in \boldsymbol{d}} 
\varphi_d
$$
This is because the relation overlap
$
\left|
\bigcap_{(z^*, \varphi_d ) \in \boldsymbol{d}} 
\big(z^*_c \cup z^*_e\big) 
\right | 
$ is always smaller than the $|z^*_c \cup z^*_e|$ of all the distinctions involved. Therefore, to maximize the integrated information of an individual relation, we need to maximize the minimum integrated information of all the distinctions involved. In other words, we need to maximize the integrated information of all the distinctions.

However, as discussed in Section \ref{sec:mechanism_bounds}, due to inter-order and intra-order constraints, all the distinctions cannot achieve their maximum. 
Therefore, we study the problem of how we can maximize $\sum_{\boldsymbol{d} \subseteq \mathcal{D}} \varphi_r(\boldsymbol{d})$, given that $\sum_{d \in \mathcal{D}} \varphi_d$ is bounded by a certain value. 
We present the analysis in multiple steps:
\begin{enumerate}[label=(\roman*)]
    \item First, we rewrite the sum of the relations' integrated information $\sum_{\boldsymbol{d} \subseteq \mathcal{D}} \varphi_r(\boldsymbol{d})$ as a linear combination of distinctions integrated information,
    \item This turns our problem into a linear programming problem, as we are looking to maximize a linear combination of distinctions integrated information, given that the sum of them is bounded by some value. We then present the solution for this problem, which gives us the bound for $\sum_{\boldsymbol{d} \subseteq \mathcal{D}} \varphi_r(\boldsymbol{d})$ for any given set of cause and effect purviews. This provides us with a measure to compare different cause effect structures. We also briefly discuss the implications of this results on optimal connectivity profile,
    \item We finally use the bounds derived in the previous section to calculate the growth rate of $\sum_{\boldsymbol{d} \subseteq \mathcal{D}} \varphi_r(\boldsymbol{d})$ in terms of the number of units.
\end{enumerate}

\textbf{Step 1:} In \cite{Albantakis2023IntegratedTerms}, it is shown that we can write $\sum_{\boldsymbol{d} \subseteq \mathcal{D}} \varphi_r(\boldsymbol{d})$ as a linear combination of distinctions' integrated information by defining $\mathcal{Z}(o)$ as:
\begin{equation}
\label{eq:purview_inclusion_single_unit}
\mathcal{Z}(o) = \{ (z, \varphi) : z = z^*_c \cup z^*_e, ((z^*_c, z^*_e), \varphi) \in \mathcal{D}, o \in z^*_c \cup z^*_e   \}.
\end{equation}
Each element in $\mathcal{Z}(o)$ is a tuple that contains both $z^*_c \cup z^*_e$ and $\varphi$ of a distinction that contains $o$ in either its cause or effect purview. $o$ is an arbitrary unit in a arbitrary state in the system. 

This definition helps us to rewrite $\sum_{\boldsymbol{d} \subseteq \mathcal{D}} \varphi_r(\boldsymbol{d})$ as \cite[S3 Text]{Albantakis2023IntegratedTerms}: 
\begin{equation}
\label{eq:analytical_solution_relations}
\small
\sum_{\boldsymbol{d} \subseteq \mathcal{D}} \varphi_r(\boldsymbol{d})
= 
\sum_{\substack{\boldsymbol{d} \subseteq \mathcal{D}\\
                  |\boldsymbol{d}| = 1}}
                  \varphi_r(\boldsymbol{d})
+
\sum_{\substack{\boldsymbol{d} \subseteq \mathcal{D}\\
                  |\boldsymbol{d}| \geq 2}}
                  \varphi_r(\boldsymbol{d})
= 
\sum_{\substack{\boldsymbol{d} \subseteq \mathcal{D}\\
                  |\boldsymbol{d}| = 1}}
                  \varphi_r(\boldsymbol{d})
+
\sum_{o}
\sum_{i = 1}^{|\mathcal{Z}(o)|} \frac{\varphi_{(i)}}{|z_{(i)}|} (2^{|\mathcal{Z}(o)| - i} - 1).    
\end{equation}

The total sum is reformulated as the sum of the integrated information of self-relations and a linear combination of $\frac{\varphi_{(i)}}{|z_{(i)}|}$ of the distinctions. $\frac{\varphi_{(i)}}{|z_{(i)}|}$ denotes sorted indexing of the elements in $\mathcal{Z}(o)$, such that $(z_{(1)}, \varphi_{(1)})$ has the smallest $\frac{\varphi}{|z|}$ ratio, $(z_{(2)}, \varphi_{(2)})$ has the second smallest $\frac{\varphi}{|z|}$ ratio, and so on. The first summation in the last term sums over all the units $o$ in all their possible states.

To find an upper bound of $\sum_{\boldsymbol{d} \subseteq \mathcal{D}} \varphi_r(\boldsymbol{d})$ for any given set of cause and effect purviews, we can simplify the objective function by finding the maximum of each term in the sum separately. The maximization is over the values of the distinction integrated information:
\begin{equation}
\label{eq:sum_phi_r_bound}
\small
\sum_{\boldsymbol{d} \subseteq \mathcal{D}} \varphi_r(\boldsymbol{d})
\leq
\max 
\sum_{\substack{\boldsymbol{d} \subseteq \mathcal{D}\\
                  |\boldsymbol{d}| = 1}}
                  \varphi_r(\boldsymbol{d})
+
\sum_{o}
\max
\sum_{i = 1}^{|\mathcal{Z}(o)|} \frac{\varphi_{(i)}}{|z_{(i)}|} (2^{|\mathcal{Z}(o)| - i} - 1)
\end{equation}

\textbf{Step 2:} The first term is bounded by
$$
\sum_{\substack{\boldsymbol{d} \subseteq \mathcal{D}\\
                  |\boldsymbol{d}| = 1}}
                  \varphi_r(\boldsymbol{d})
\leq
\sum_{d \in \mathcal{D}} \varphi_d.
$$
This is because, as we just discussed, the integrated information of a self-relation is less than its distinction integrated information. 

We can find the maximum for the second term in the objective function by solving the following problem for a given $o$:
\begin{equation}
\label{eq:optimization}
\begin{aligned}
\max
&\sum_{i = 1}^{|\mathcal{Z}(o)|} \frac{\varphi_{(i)}}{|z_{(i)}|} (2^{|\mathcal{Z}(o)| - i} - 1), \\
\text{subject to}
& \sum_{i = 1}^{|\mathcal{Z}(o)|} \frac{\varphi_{(i)}}{|z_{(i)}|} \leq S(o)
\end{aligned}
\end{equation}
Even if the value of $S(o)$ is not known, we can find the optimal distribution of $\sum_{d \in \mathcal{D}} \varphi_d$ over all the distinctions, study the optimal connectivity profile, and analyze the trade-off between $S(o)$ and $|\mathcal{Z}(o)|$. 
In \nameref{S3_Appendix}, we provide the solution to this problem and the conditions to achieve the bound. In short, we show that a system can achieve the maximal value, if the distinctions with the same $|z^*_c \cup z^*_e|$ have the same value of integrated information. Systems with random connectivity cannot satisfy this symmetry and therefore they cannot have very large values of $\sum_{\boldsymbol{d} \subseteq \mathcal{D}} \varphi_r(\boldsymbol{d})$. On the other hand, the systems that exhibit homogeneous grid-like symmetries are more likely to satisfy it, as many subsets of units with same purview size have the same connectivity pattern, and therefore the same integrated information. 
We also show that the maximum achievable value for the objective function of \eqref{eq:optimization} is
\begin{equation}
\label{eq:phi_r_bound_o}
S(o) 
\left(
\frac{2^{|\mathcal{Z}(o)|}}{|\mathcal{Z}(o)|}
(1 - \frac{1}{2^{|\mathcal{Z}(o)|}}) - 1
\right).
\end{equation}
Plugging this result back into \eqref{eq:sum_phi_r_bound}, we get:
\begin{equation}
\label{eq:sum_phi_r_bound_final}
\begin{aligned}
\small
\sum_{\boldsymbol{d} \subseteq \mathcal{D}} \varphi_r(\boldsymbol{d})
\leq
\sum_{d \in \mathcal{D}} \varphi_d
+
\sum_{o}
S(o) 
\left(
\frac{2^{|\mathcal{Z}(o)|}}{|\mathcal{Z}(o)|}
(1 - \frac{1}{2^{|\mathcal{Z}(o)|}}) - 1
\right),
\end{aligned}
\end{equation}
which describes the trade-off between the number of distinctions that contain an specific purview unit $o$, $|\mathcal{Z}(o)|$, and the sum of their $\frac{\varphi}{|z|}$, $S(o)$. 
In general, due to the inter- and intra-order constraints, increasing 
$|\mathcal{Z}(o)|$ might decrease $S(o)$. However, since the bound increases almost exponentially in $|\mathcal{Z}(o)|$, in most cases we can increase the bound by increasing $|\mathcal{Z}(o)|$ for all $o$. This means that a system is more likely to achieve large values of $\sum_{\boldsymbol{d} \subseteq \mathcal{D}} \varphi_r(\boldsymbol{d})$, if its distinctions have bigger purviews. Therefore, densely connected grid-like systems are the candidate systems for high values of $\sum_{\boldsymbol{d} \subseteq \mathcal{D}} \varphi_r(\boldsymbol{d})$.

\textbf{Step 3:} To calculate a growth rate and a bound as a function of only the number of units in the system, we can study the following extreme case: We know that $|\mathcal{Z}(o)|$ cannot be larger than the total number of distinctions, \emph{i.e.}, $|\mathcal{Z}(o)| \leq 2^N - 1.$  
To find an upper bound for $S(o)$, we can use one of the bounds derived in Section \ref{subsec:single_mechanism}, \emph{i.e.,} $\varphi_e \leq |M||z_e|$. Since for any distinction $\varphi \leq \varphi_e$ and $|z^*_c \cup z^*_e| \geq |z^*_e|$, we have $\frac{\varphi}{|z^*_c \cup z^*_e|} \leq \frac{\varphi_e}{|z^*_e|} \leq |M|$. 
Therefore:
$$
\sum_{i = 1}^{|\mathcal{Z}(o)|} \frac{\varphi_{(i)}}{|z_{(i)}|} 
\leq
\sum_{|M|=1}^N |M| \binom{N}{|M|} = \frac{N}{2}2^N.
$$
Using the above values for $|\mathcal{Z}(o)|$ and $S(o)$, as well as the bound in \eqref{eq:sum_bound_K_nonunique}, we get:
\begin{equation}
\begin{aligned}
\small
\sum_{\boldsymbol{d} \subseteq \mathcal{D}} \varphi_r(\boldsymbol{d})
\leq
\frac{N^2}{2} 2^{N}
+
N^22^N 
\left(
\frac{2^{2^N - 1}}{2^N - 1}
(1 - \frac{1}{2^{2^N - 1}}) - 1
\right).
\end{aligned}
\end{equation}
Since achieving $\varphi_e = |M||z_e|$ is not possible for all the distinctions, this bound is not tight, but it gives us the growth rate of $\mathcal{O}(N^2 2^{2^{N}})$. Alternatively, we could have arrived at the same rate of growth by noticing that the maximum number of relations is $2^{2^N-1} - 1$ and the maximum value for distinction integrated information is $N^2$. 

The solution studied in this section translates the problem of finding a bound for $\sum_{\boldsymbol{d} \subseteq \mathcal{D}} \varphi_r(\boldsymbol{d})$ to the easier of problem finding a bound or a closed form solution for $S(o)$. In other words, Eq \eqref{eq:sum_phi_r_bound_final} readily gives us the bound for $\sum_{\boldsymbol{d} \subseteq \mathcal{D}} \varphi_r(\boldsymbol{d})$, for any given bound for $S(o)$.
Deriving a tighter general bound for $S(o)$, \emph{i.e.}, any arbitrary set of distinctions that share a common purview unit, leads to a tighter bound for $\sum_{\boldsymbol{d} \subseteq \mathcal{D}} \varphi_r(\boldsymbol{d})$ and remains an open problem.

\section{Numerical experiments}
\label{sec:experiments}

In this section, we provide numerical evaluations of the bounds discussed above, as well as experiments illustrating how tight the bounds are. The calculations were performed using the freely available PyPhi toolbox\footnote{ The code for PyPhi is available at \url{https://github.com/wmayner/pyphi}. The code for the experiments presented here is available at \url{https://github.com/zaeemzadeh/IIT-bounds}.} \cite{Mayner2018PyPhi:Theory}.
Since the values of mechanism integrated information and relation integrated information are state-specific, the bounds and the optimal TPM constructions are state-dependent. This means that if a system or mechanism is optimal in one state, it is not necessarily optimal in other states. Here, we show the results for the optimal state, as we are interested in the bounds and the maximal values.
Fig \ref{subfig:order_bound} shows the distinction integrated information of distinctions in the system discussed in Section \ref{subsec:selectivity_1}. The solid lines represent the integrated effect information, $\varphi_e^*(K)$, of a mechanism of size $K$ over itself, if all such mechanisms specify themselves with probability $1$. In this setting the mechanism size is same as the purview size. 
The dotted line represent the upper bound if we consider the mechanisms in isolation, which is simply the number of connections, \emph{i.e.,} $K^2$. For mechanism size of $K = N$ and $K=1$, where there is no overlap among the mechanisms, this upper bound is achievable. 
However, when the mechanisms share parts with each other, the integrated effect information cannot get close to the $K^2$ bound. 
For a fixed mechanism size $K$, if we increase the system size $N$, the maximum achievable $\varphi_e$ decreases. This is because in a larger system, there are more mechanisms of size $K$, making each mechanism compete with combinatorially more mechanisms. This figure also shows that for a fixed system size $N$, the maximum achievable $\varphi_e$ increases with the mechanism/purview size. However, this can be misleading. For example, although the mechanisms of size $K=N$ can achieve the largest $\varphi_e$ value, they are the least numerous mechanisms.  

\begin{figure}[h]
\centering    
\subfigure[ ]{
\label{subfig:order_bound}
\includegraphics[width=0.46\textwidth]{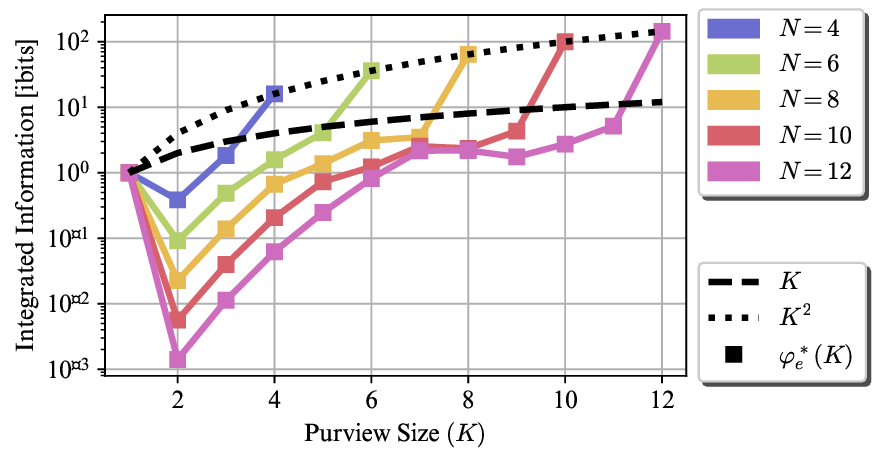}
}    
\subfigure[ ]{
\label{subfig:sum_order_bound}
\includegraphics[width=0.46\textwidth]{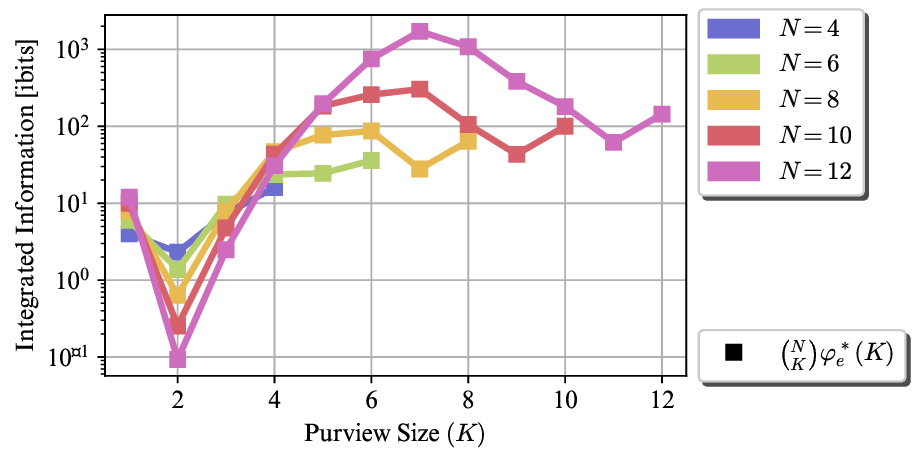}
}   
\caption{Upper bounds for integrated effect information $\varphi_e$ vs the value achieved by the proposed construction, for different mechanism/purview size $K=1, \dots, 12$. In this setting, the mechanism is same as the purview. The dotted line represents the upper bound if we consider each mechanism in isolation, which is the number of connections. The solid lines represent the integrated effect information achieved the proposed TPM construction, $\varphi_e^*(K)$. \subref{subfig:order_bound} $\varphi_e^*(K)$ for a single mechanism of size $K$ and \subref{subfig:sum_order_bound} $\sum \varphi_e^*(K)$ summed over all the mechanisms of size $K$ in a system of size $N$. This figure shows that the derived bound is achievable when the mechanisms do not share parts. Also, in larger systems, mechanisms of size close to $\frac{N}{2}$ achieve the maximum $\sum \varphi_e^*(K)$, as they are the most numerous ones.}
\end{figure}

To emphasize this point, Fig \ref{subfig:sum_order_bound} illustrates the sum of integrated effect information of all the mechanisms of size $K$, \emph{i.e.,} $\binom{N}{K} \varphi_e^*(K)$. This figure shows that although larger mechanisms can achieve higher $\varphi_e$, if the goal is to maximize $\sum \varphi_e$, it is better to maximize the integrated information of the more numerous mechanisms. 
For example, for the system size of $N=12$, the maximum $\binom{N}{K} \varphi_e^*(K)$ is achieved when $K=7$, which also illustrates the trade-off between the achievable $\varphi_e$ and the number of mechanisms. Although mechanisms of size $7$ are less numerous than mechanisms of size $6$, each of them can achieve  larger integrated effect information. 

Fig \ref{subfig:sum_order_bound} exhibits the maximum achievable $\sum \varphi_e(m)$ for a \emph{specific} mechanism size, when all such mechanisms fully specify themselves. Fig \ref{fig:sum_phi_vs_K} shows the sum of integrated information over all the mechanisms, if mechanisms of certain size fully specify themselves. For each system size $N$, $N$ different TPMs are constructed, each representing a system where all the mechanisms of size $K$ have selectivity of $1$, $K=1,\dots,N$. 
The dotted lines represent $\sum_{K=1}^N \binom{N}{K} \varphi_e^*(K)$. This value represents a hypothetical system in which all the mechanisms have selectivity of $1$ over themselves and can achieve their maximum $\varphi_e$. We can use this value as a numerical upper bound.

As shown in \nameref{S3_Appendix}, to evaluate $\varphi_e^*(K)$, we need to compute only $\frac{K}{2} + 1$ partitions, making the computational complexity of calculating $\sum_{K=1}^N \binom{N}{K} \varphi_e^*(K)$ quadratic in $N$. 
The solid lines in the figure is the achieved sum of integrated effect information $\sum_{M\subseteq S} \varphi_e(m)$, and the markers represent the achieved sum of integrated information $\sum_{M\subseteq S} \varphi(m) = \sum_{M\subseteq S} \min \{ \varphi_e(m), \varphi_c(m) \}$. It is evident that the sum bound is  tight, as $\sum_{M\subseteq S} \varphi_e(m)$ and $\sum_{M\subseteq S} \varphi(m)$ can get very close to the bound. 
This figure shows that we can exploit the symmetries and the constraints on the system to find good approximations of $\sum_{M\subseteq S} \varphi(m)$ with significantly less computation.

\begin{figure}[h]
\centering    
\includegraphics[width=0.75\textwidth]{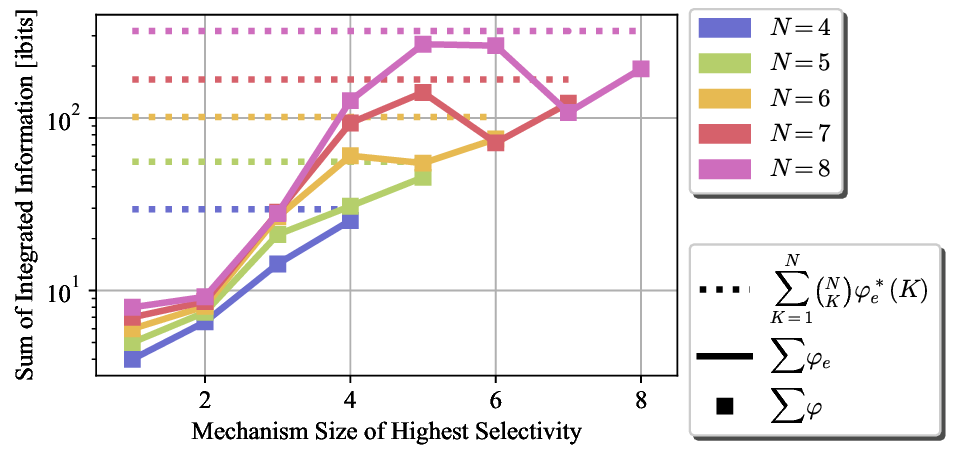}
\caption{Sum of integrated effect information $\sum_{M\subseteq S} \varphi_e(m)$ and sum of integrated information $\sum_{M\subseteq S} \varphi(m)$ for a system in which all the mechanisms of size $K$ specify themselves with probability $1$, \emph{i.e.,} selectivity is $1$, for different system sizes $N=4, \dots, 8$.  
In this setting, the mechanism is same as the purview. 
The dotted line is the numerical bound derived in Section \ref{subsec:selectivity_1} the solid line is $\sum_{M\subseteq S} \varphi_e(m)$ achieved by the proposed TPM construction, and markers denote $\sum_{M\subseteq S} \varphi(m)$ achieved by the construction. The bound is a tight upper bound for $\sum_{M\subseteq S} \varphi_e(m)$, and $\sum_{M\subseteq S} \varphi_e(m)$ is the same as $\sum_{M\subseteq S} \varphi(m)$. Furthermore, in larger systems, both of the sums are maximized when mechanisms of size close to $\frac{N}{2}$ have selectivity of $1$. }
\label{fig:sum_phi_vs_K}
\end{figure}

Table \ref{tab:individual_bounds} and Table \ref{tab:bounds} summarize the bounds discussed in this paper. Bound I and Bound II are the bounds derived in Eq \eqref{eq:sum_bound_K_nonunique} and Eq \eqref{eq:sum_bound_K}, respectively. They represent the absolute upper bounds without considering the inter- and intra-order constraints. To calculate their corresponding bound of the sum of relations' integrated information, the closed form sum in Eq \eqref{eq:analytical_solution_relations} is used, excluding the self-relation term for a less cluttered presentation. 
It is worthwhile to mention that each of the rows in Table \ref{tab:bounds} can be used to calculate its corresponding bound on $\Phi$, which is the sum of integrated information of all the distinctions and the relations. 

Bound III corresponds to the hypothetical system discussed in Section \ref{subsec:selectivity_1}, where all the mechanisms can achieve the maximal integrated effect information over themselves while having a selectivity of $1$. 
As discussed earlier, $\sum_{K=1}^N \binom{N}{K} \varphi_e^*(K)$ can be used as a tight upper bound for the sum of distinctions' integrated information in such system. 
To numerically calculate its corresponding bound for $\sum \varphi_r$, we further assumed $Z_c = S$ for all the distinctions and used Eq \eqref{eq:analytical_solution_relations}. This assumption makes the number of relations in this system the same as the number of relations in Bound I, as all the distinctions are related over their cause purviews. This makes the $\sum \varphi_r$ in Bounds I and III grow at a similar rate. Next, we will show Bound III is tight for both $\sum \varphi_d$ and $\sum \varphi_r$ of the proposed construction and may in fact be a general bound.
\begin{table}[!ht]
\begin{adjustwidth}{-2.25in}{0in} 
\centering
\caption{
{\bf The bounds for integrated information of an individual distinction, relation, and system}}
\small
\begin{tabular}{|m{8cm}|m{10cm}|}
\hline
 Description & Bound \\ \thickhline
Cause/effect integrated information of a mechanism-purview pair given a partition & 
 The number of causal connections severed by the partition, $\mathcal{N}(\theta)$.\\\hline
Cause/effect integrated information of a mechanism-purview & 
Total number of causal connections between the mechanism and the purview, $|M||Z|$. \\\hline
Relation integrated information of any subset of distinctions $\boldsymbol{d}$& 
The smallest distinction's integrated information,
$
\varphi_r(\boldsymbol{d}) \leq \min_{(z^*, \varphi_d ) \in \boldsymbol{d}} 
\varphi_d
$.\\\hline
System integrated information of a system with $|S|$ units& 
The maximum number of connections cut by any valid cut, which is $|S|(|S|-1)$ for the system partitions considered in \cite{Marshall2023SystemInformation}. \\\hline
\end{tabular}
\label{tab:individual_bounds}
\end{adjustwidth}
\end{table}

\begin{table}[!ht]
\begin{adjustwidth}{-2.25in}{0in} 
\centering
\caption{
{\bf The bounds for the sum of distinctions integrated information $\sum \varphi_d$ and the sum of relations integrated information $\sum \varphi_r$.}}
\small
\begin{tabular}{|c|c|l|l|l|l|}
\hline
 & Description & $\sum \varphi_d$ & $\mathcal{O}(\sum \varphi_d)$ & $\sum \varphi_r$ & $\mathcal{O}(\sum \varphi_r)$\\ \thickhline
 \makecell{ Bound I \\ (not achievable)} & 
 \makecell{$\forall M \subseteq S, \varphi_d = |M||S|$ \\ $Z_e^* = Z_c^* = S$} &
 $\frac{N^2}{2}2^N$ &
 $N^2 2^N$ &
 \footnotesize{ $N \sum_K K \left( 2^{\sum_{j = K}^N\binom{N}{j}} ( 1 - 0.5^{\binom{N}{K}}) -  \binom{N}{K}\right)$}
 & 
 $N^2 2^{2^N}$
 \\ \hline
 Bound II & 
  \makecell{$\forall M \subseteq S, \varphi_d = |M|^2$ \\ $Z_e^* = Z_c^* = M$} &
 $\frac{N(N+1)}{4}2^N$ &
 $N^2 2^N$ &
 \footnotesize{$N \sum_K K \left( 2^{\sum_{j = K-1}^{N-1}\binom{N-1}{j}} ( 1 - 0.5^{\binom{N-1}{K-1}}) -  \binom{N-1}{K-1}\right)$} & 
 $N^2 2^{2^{N-1}}$
 \\ \hline
 Bound III &  
\makecell{The numerical bound\\ derived in Section \ref{subsec:selectivity_1}}
&
 N/A &
 $2^N$ &
N/A & 
 $2^{2^{N}}$
 \\ \hline
\end{tabular}
\label{tab:bounds}
\end{adjustwidth}
\end{table}

Fig \ref{fig:sum_phi_vs_N_random} illustrates how the upper bounds and the maximum achievable $\sum_{M\subseteq S} \varphi(m)$ by different TPM construction methods grow as we increase the system size $N$. The TPMs are constructed by the following methods:
\begin{enumerate*}[label=(\roman*)]
    \item \textbf{High selectivity} refers to the construction discussed in Section \ref{subsec:selectivity_1} and Theorem \ref{thm:order_k_fully_det_effect}. As shown in \nameref{S3_Appendix}, the TPM corresponding to this construction only contains $0$s and $1$s.
    \item \textbf{Random deterministic} construction fills the TPM entries with only $0$s and $1$s at random. The probability of an entry being one is drawn from a uniform distribution over $[0,1]$. 
    \item \textbf{Random} construction draws the entries of the TPM from a uniform distribution over $[0,1]$.
    \item \textbf{Hamming} is the decoding procedure of the famous $(7,4)$ Hamming code. In a system of $7$ units, the state at time $t+1$ is the closest valid codeword to the state at time $t$ with probability $1$. This construction has been suggested as a good candidate system for high integrated information \cite{Tegmark2015ConsciousnessMatter}.
\end{enumerate*}

Fig \ref{fig:sum_phi_vs_N_random} shows that both the numerical upper bound and the achievable values for the proposed construction scale almost exponentially with the system size $N$ and at the rate of $\mathcal{O}(2^N)$. 
For both of the random constructions, the  maximum sum achieved over $100$ runs is reported in the figure. This figure shows that generally deterministic TPMs achieve higher sum of integrated information, compared to nondeterministic TPMs. This is because to achieve large $\varphi$, selectivity close to $1$ is necessary (see Lemma \ref{lem:pi_1_necessary}) and to have selectivity of $1$, TPM entries need to be $1$ or $0$ (see Lemma \ref{lem:det_mech}). 
This figure also suggests that the numerical upper bound derived in Section \ref{subsec:selectivity_1}, \emph{i.e.,} $\sum_{K} \binom{N}{K} \varphi_e^*(K)$, might be a general upper bound for any system, not just for high selectivity regime. The generality of this upper bound remains to be investigated. 

\begin{figure}[h] 
\centering    
\includegraphics[width=1.\textwidth]{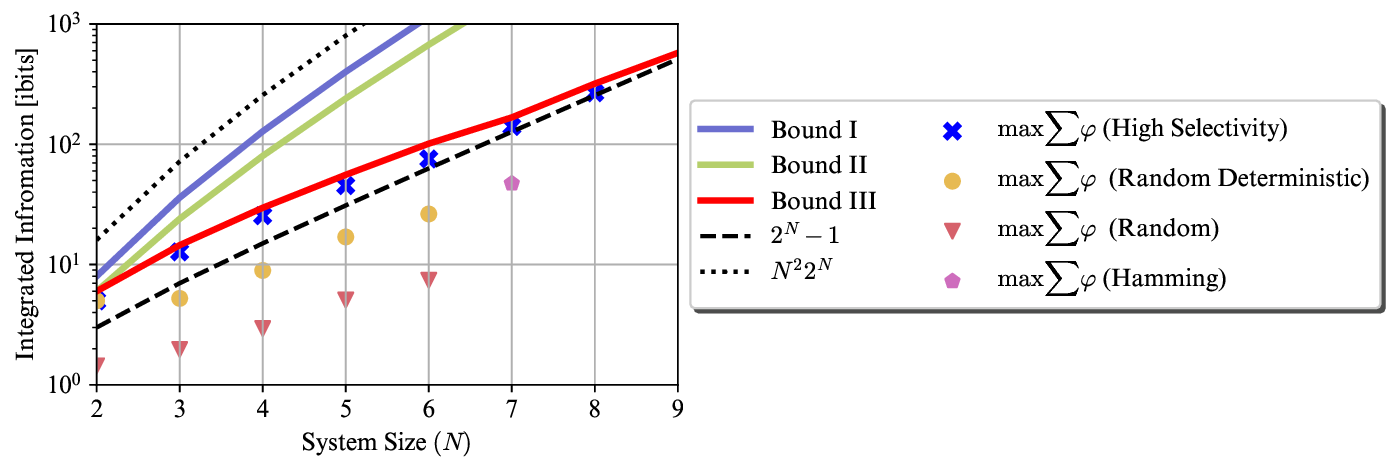}
\caption{Growth of different upper bounds for the sum of distinctions' integrated information w.r.t the system size $N$, as well as the maximum achievable $\sum_{M\subseteq S} \varphi(m)$ by different TPM construction methods. 
The derived numerical upper bound grows almost exactly as $2^N$. $\sum_{M\subseteq S} \varphi(m)$ achievable by the proposed construction grows exponentially with $N$ as well. 
The deterministic TPMs, \emph{i.e.,} TPMs containing only $0$s and $1$s, outperform nondeterministic TPMs. The scale of the $y$-axis is logarithmic.
}
\label{fig:sum_phi_vs_N_random}
\end{figure}

Finally, Fig \ref{fig:sum_phi_relations_vs_N_random} illustrates the bounds and the achievable values for the sum of integrated information of relations in a system. Similar to the distinctions case the proposed construction can get very close to its bound. Furthermore, this figure shows that higher sum of distinctions integrated information does not translate to higher sum of relations integrated information. For example, unlike Fig \ref{fig:sum_phi_vs_N_random},  bound III is larger than bound II in Fig \ref{fig:sum_phi_relations_vs_N_random}. This is because having larger purviews increases the number of relations exponentially, making the purview sizes, not the integrated information value, the dominant factor in the sum of relations integrated information. Due to the super exponential growth of the number of the relations in terms of the number of the units, we used double-logarithmic scale for the $y$-axis.

\begin{figure}[h] 
\centering    
\includegraphics[width=1.\textwidth]{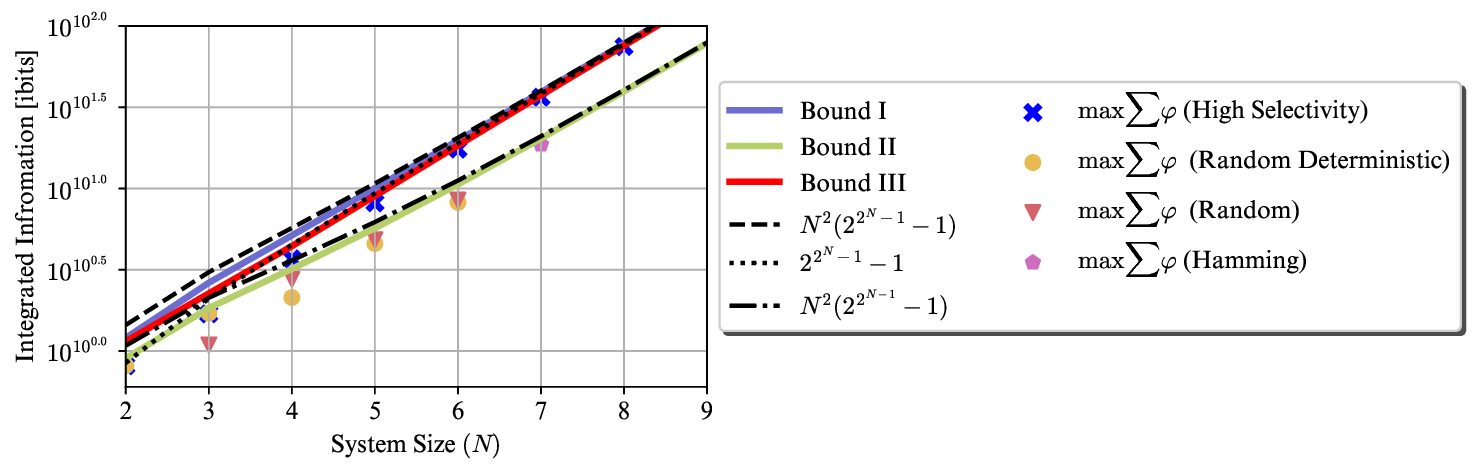}
\caption{Growth of different upper bounds for the sum of relations' integrated information w.r.t the system size $N$, as well as the maximum achievable values by different TPM construction methods. 
The derived numerical upper bound and  $\sum_{M\subseteq S} \varphi(m)$ achievable by the proposed construction grow super exponentially with $N$. The scale of the $y$-axis is double-logarithmic.
}
\label{fig:sum_phi_relations_vs_N_random}
\end{figure}

\section{Discussion and future directions}
\label{sec:discussion}
As a theory of consciousness, Integrated Information Theory (IIT) \cite{Albantakis2023IntegratedTerms, Tononi2016IntegratedSubstrate} introduces a mathematical framework that quantifies the causal powers of systems and subsets of units in a system. 
The methods of IIT have inspired activity in the neuroscience \cite{Casali2013ABehavior, Haun2017ConsciousElectrocorticography, Gallimore2015RestructuringTheory, Lee2008PropofolAnesthesia, Leung2021IntegratedMelanogaster}, complexity science \cite{Albantakis2014EvolutionComplexity, Albantakis2021QuantifyingMeasures, Mediano2022IntegratedComplexity, Tajima2017IntegratedDynamics}, and physics \cite{Tegmark2015ConsciousnessMatter, Barrett2014AnPhysics, Albantakis2023ComputingMechanism}. 
In short, IIT identifies the essential properties of experience and provides a framework to quantify to what extent a physical system complies with such properties in terms of its causal powers. 
These properties are intrinsicality, information, integration, exclusion, and composition \cite{Albantakis2023IntegratedTerms}.
This work is mainly focused on the last property, \emph{i.e.,} composition, 
meaning that we are concerned with 
how the subsets of units in the system specify causes and effects over subsets of units and how these causes and effects are related. 
Our results are applicable even if a candidate system does not satisfy all the essential properties.

Since the exact calculation of integrated information in larger biological systems is not feasible, theoretical investigation of different connectivity profiles or approximate calculations are necessary. The results and the proofs presented in this work are a step toward practical application of IIT to real-world systems in several ways. 
First, we provide the growth rate of several measures as a function of the number of units under certain assumptions. Such results and techniques can be used to derive analytical forms for other connectivity profiles. 

Second, we provide techniques to exploit the symmetries and the repeating patterns in connectivity to simplify the computation of the integrated information in certain families of systems. Similar techniques can be used to derive approximate measures for any biological system with repeating patterns of connectivity. For instance, In \nameref{S3_Appendix}, we show that the normalized partitioned integrated information can be simplified to average information gain over all the severed connections. This can potentially be used in conjunction with exact or approximate subset selection algorithms to find the optimal partition in a more computationally feasible manner. In this work, we used this technique and the symmetries in the system to reduce the number of candidate partitions significantly, from more than exponential to linear in size of the mechanism.

Finally, our results provide us with qualitative insights about the connectivity profile of the biological systems that can achieve higher values of integrated information.
For example in the proof of Theorem \ref{thm:order_k_fully_det_effect}, we provide a TPM construction method that maximizes the sum of distinction integrated information under certain assumptions. Similar techniques can be used to design optimal TPMs and study biological systems satisfying other assumptions. Our analysis can also be helpful to compare different families of networks, \emph{e.g.}, random vs homogeneous connectivity. This can lead to better understanding of which areas of brain can have higher integrated information. For example, in Section \ref{sec:relation_bounds}, we derive a bound for the sum of relations integrated information for any set of cause and effect purviews. 
Our results showed that the sum of integrated information of relations is maximized when the distinctions' integrated information are proportional to the size of their purviews. This means distinctions with the same size of purview should have the same value of integrated information. Such regularity is not attainable in systems with random connectivity profile, but can potentially be achieved in systems with homogeneous grid-like connectivity. 

Theoretical investigation of any measure can help us to sharpen our intuition and deepen our understanding of that measure. For example, optimal coding theorems establish the connection between the Shannon entropy measure and the minimum  number of bits assigned to symbols. This makes Shannon information theory more tangible and easier to understand. Similarly, in this paper, we show that the value of mechanism integrated information, system integrated information, and the information loss by any partition are closely related to the number of causal connections. This gives us a more intuitive understanding of their behavior.
More importantly, the results discussed here are utilized in the development of IIT. For example, in IIT, to fairly compare partitions that severe different number of causal connections, the value of the information loss corresponding to each partition is normalized by the maximum information loss achievable by such partition. The formal proof for the upper bound achievable by any partition is presented in Section \ref{subsec:single_mechanism}. 

More generally, our results have implications outside the field of theoretical neuroscience, such as finding the signatures of complexity in biological and artificial networks. There has been increasing interest in IIT as a formal framework to study complex systems and non-linear dynamics \cite{Niizato2020FindingTheory, Tajima2017IntegratedDynamics, Albantakis2014EvolutionComplexity, Albantakis2021QuantifyingMeasures, Mediano2022IntegratedComplexity}. Therefore, our results and analysis can also be extended to the field of complexity science and to potentially derive approximate or heuristic measures of complexity \cite{Oizumi2016UnifiedGeometry, Kanwal2017ComparingMachines, Nilsen2019EvaluatingInformation}. In \nameref{S2_Appendix}, we show that the results provided here are still relevant even if other distance measures, such as point-wise mutual information or KL divergence, are used for quantifying the integrated information.

There are also many questions left open to be explored. The results provided here are state-dependent, meaning that they are achievable only in one state of the system and/or the mechanism. 
The problem of finding the conditions under which the systems and the mechanisms can achieve high integrated information in more than one state is open to investigation. 
Also, as shown in Section \ref{sec:mechanism_bounds}, we can make the distinction bound tighter by considering the inter- and intra-order constraints. In Section \ref{subsec:selectivity_1}, we developed a linear time method to evaluate the integrated information in a special class of systems and numerically showed that the this bound is much smaller than the bound without considering the constraints. 
Finding a closed form solution for such systems remains an open problem. In general, finding a closed form solution for the bound on sum of the integrated information of any arbitrary subset of distinctions is an interesting problem and can be used in conjunction with our results in Section \ref{sec:relation_bounds} to derive tighter bounds for sum of integrated information of relations.

Furthermore, in our derivation of the bounds we mainly focused on the integrated effect information, as the bound for $\varphi_e$ is a bound for $\varphi_d$ as well. In \nameref{S1_Appendix}, we show that most of our results are applicable to the integrated cause information as well. However, there are dependencies among the cause and effect side that can potentially be used to make the overall bounds tighter.

\section*{Supporting information}

\paragraph*{S1 Appendix.}
\label{S1_Appendix}
{\bf Integrated cause information.} 
Definition of the integrated cause information and description of how the results for integrated effect information can be translated to the integrated cause information.

\paragraph*{S2 Appendix.}
\label{S2_Appendix}
{\bf Using other difference measures.} 
Describes how the results change if other difference measures, such as KL divergence, are used for comparing the distributions.

\paragraph*{S3 Appendix.}
\label{S3_Appendix}
{\bf Proofs} 

\section*{Acknowledgments}
The authors thank Larissa Albantakis, William Marshall, Leonardo Barbosa, and Erick Chastain for helpful discussions and comments on earlier drafts of this article, and Will Mayner for maintaining PyPhi, his advice on implementing the experiments, and his helpful comments.
In addition, this research was supported by the Templeton World Charity Foundation (grant number TWCF0216) and by the Tiny Blue Dot Foundation (UW 133AAG3451; G.T.).

\nolinenumbers

\bibliography{references.bib, more_ref}
\label{LastPage}

\appendix
\newpage
\fancyfoot{}
\section[\appendixname~\thesection]{Integrated cause information}
\label{app:cause}
In Section \ref{sec:mechanism_bounds}, we discussed the definition and the upper bounds of the distinction integrated information. For simplicity, our discussion and theoretical results were mostly focused on the integrated effect information. Since $\varphi_d = \min \{ \varphi_c, \varphi_e \} \leq \varphi_e$, any bound on $\varphi_e$ or $\sum \varphi_e$ is a bound for $\varphi_d$ or $\sum \varphi_d$. Here, for completeness, we discuss the definition of integrated cause information and show how the results obtained for the effect side can be translated to the cause side. 

The cause repertoire $\pi_c(Z_{t-1} \mid M = m)$ is defined as as the probability distribution over a potential cause purview at time $t-1$, $Z_{t-1}$, given a mechanism in state $m$ and can be calculated using Bayes' rule:
$$
\pi_c(z \mid m) = \frac{\pi_e(m \mid z) \pi_c(z)}{\pi_e(m;Z)},
$$
where $\pi_e(m;Z)$ is the unconstrained effect probability as defined in Section \ref{sec:mechanism_bounds} and $\pi_c(z) = |\Omega_Z|^{-1}$ is the unconstrained cause probability. The time subscripts are dropped to avoid cluttering the notation. Similar to the effect side, the maximal cause state of the mechanism $m$ over a potential cause purview $Z$ can be found as:
$$
    z'_c(m, Z) =\arg \max_{z \in \Omega_Z} \pi_c(z \mid m)\log_2\left(\frac{\pi_e(m \mid z)}{\pi_e(m;Z)}\right).
$$
Since there is at least one state with $\pi_e(m \mid z) \geq \pi_e(m;Z)$, the maximal cause state is always a state for which the cause $z$ increases the probability of the mechanism $m$ compared to its unconstrained probability. Finally, the integrated cause information is calculated as:
\begin{equation}
    \label{eq:purview_phi_c_def}
    \varphi_c(m,Z) = \pi_c(z'_c \mid m)
    \pospart{\log_2\left(\frac{\pi_e(m \mid z'_c)}{\pi_e^{\theta'}(m \mid z'_c)}\right)},
\end{equation}
where $\theta'$ is the minimum information partition (MIP), the partition that achieves the minimum normalized integrated cause information. 
We can also find the most irreducible cause purview of a mechanism as:
$$
z_c^*(m) = \arg\max_{\{z'_c | Z \subseteq S\}} \varphi_c(m, Z = z'_c).
$$
These steps are similar to the steps described for the effect side in Section \ref{sec:mechanism_bounds}. The only difference is the definition of the repertoires and the use of a slightly different distance measure. The measure being used for the cause side has the form $\small \pi_c(z \mid m)\pospart{\log_2\left(\frac{\pi_e(m \mid z)}{\pi_e'(m \mid z)}\right)}$, where the selectivity uses the backward probability and the informativeness uses the forward probability. This measure satisfies the following properties:
\begin{enumerate*}[label=(\roman*)]
    \item The measure differs from $0$ only if the cause state increases the probability of the mechanism state.
    \item The measure is not an aggregate over all the states and reflects how much change is made in an individual state.
    \item In a scenario where adding more units to the cause purview does not increase the probability of the mechanism state further, having the new units in the maximally irreducible purview is discouraged.
\end{enumerate*}

More importantly, since the informativeness term is the same for both $\varphi_e$ and $\varphi_c$, all the results described for $\varphi_e$ hold for $\varphi_c$ as well. For example, the normalization factor derived in Lemma \ref{lem:num_connections} holds for both the cause and effect information, as it is a bound on the infromativeness term. Similarly, since the proofs for Theorem \ref{thm:effect_interlevel} and Theorem \ref{thm:order_k_fully_det_effect} only use the informativeness term to derive a bound for the integrated effect information, they both hold for the integrated cause information.

\newpage
\section[\appendixname~\thesection]{Using other difference measures}
\label{app:other_measures}
As discussed in Section \ref{sec:mechanism_bounds} and \nameref{S1_Appendix}, the difference measure to quantify $\varphi_e$ and $\varphi_c$ satisfy certain properties. In this section, we study how our results might change if we change the difference measure to satisfy slightly different properties. 

\subsection[\appendixname~\thesection]{Absolute informativeness}
Due to using the positive part operator $|.|_+$ in the informativeness, integrated information is non-zero only if the state at $t-1$ increases the probability of the state at $t$. This matches our intuition that a cause should increase the probability of its effect. However, we can show even if we use the absolute value operator in the difference measure, our results would stay the same, by showing that the upper bound for the negative case is smaller than the upper bound for the positive case.
Lemma \ref{lem:neg_log_ratio} states the upper bound for the case where the mechanism \emph{decreases} the probability.

\begin{restatable}[Maximum for negative cause/effect]{Lemma}{negativecause}
\label{lem:neg_log_ratio}
For $0 \leq p, q \leq 1$, if $p \leq q$ then $$
p\absval{\log_2(\frac{p}{q})} \leq q\frac{\log_2(e)}{e} \leq  \frac{\log_2(e)}{e} \approx 0.531.
$$
\end{restatable}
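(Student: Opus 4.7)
The plan is to reduce the statement to a one-variable calculus optimization. Since $p \le q$, we have $p/q \le 1$, so $\log_2(p/q) \le 0$ and the absolute value simply unfolds as
\[
p\,\absval{\log_2(p/q)} = p\log_2(q/p).
\]
So the task reduces to bounding $f(p) := p\log_2(q/p)$ uniformly over $p \in [0,q]$ for fixed $q \in [0,1]$, and then taking the worst case over $q$.

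Next I would maximize $f$ in $p$ by differentiation. Writing $f(p) = p\log_2 q - p\log_2 p$, the derivative is
\[
f'(p) = \log_2(q/p) - \log_2 e,
\]
which vanishes exactly when $p = q/e$. Since $f'' < 0$, this is a maximum, and substituting back gives $f(q/e) = (q/e)\log_2 e$. This yields the first inequality $p\,\absval{\log_2(p/q)} \le q\log_2(e)/e$. The second inequality is immediate from $q \le 1$, and the numerical value $\log_2(e)/e \approx 0.531$ follows.

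The only mild subtlety is handling the boundary case $p = 0$, where $f(0) = 0$ by the usual convention $0\log 0 = 0$, so the maximum is genuinely attained in the interior at $p = q/e$ (assuming $q > 0$; the case $q = 0$ forces $p = 0$ and both sides are $0$). I do not expect any real obstacle here, since this is a standard single-variable optimization; the main thing to take care of is being explicit that $q/e \le q \le 1$, so the optimizer lies in the admissible range $[0,q]$.
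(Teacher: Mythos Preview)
Your proposal is correct and follows essentially the same approach as the paper: both unfold the absolute value to $-p\log_2(p/q)$ (equivalently $p\log_2(q/p)$), differentiate in $p$ with $q$ fixed, locate the critical point $p=q/e$, verify it is a maximum via the second derivative, and then use $q\le 1$ for the final inequality. Your treatment is slightly more careful in explicitly checking that $p=q/e$ lies in $[0,q]$ and in handling the degenerate cases $p=0$ and $q=0$, but the core argument is identical.
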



\begin{proof}
For $0 \leq p \leq q \leq 1$, we have $p\absval{\log_2(\frac{p}{q})} = - p\log_2(\frac{p}{q})$ and 
$$
\begin{aligned}
\frac{d(- p\log_2(\frac{p}{q}))}{dp} &= \frac{-\ln(\frac{p}{q}) - 1}{\ln(2)}.\\
\frac{d^2(- p\log_2(\frac{p}{q}))}{dp^2} &= -\frac{1}{\ln(2)p} < 0.
\end{aligned}
$$
Therefore, $p\absval{\log_2(\frac{p}{q})}$ is concave for $0 \leq p \leq q \leq 1$.  The maximum is obtained by setting the first derivative to $0$:
$$
\begin{aligned}
\frac{d(- p\log_2(\frac{p}{q}))}{dp} &= \frac{-\ln(\frac{p}{q}) - 1}{\ln(2)} = 0\\
p = \frac{q}{e}
\end{aligned}
$$
For $p = \frac{q}{e}$, we have $p\absval{\log_2(\frac{p}{q})} = \frac{q}{e}\log_2(e) \leq \frac{\log_2(e)}{e}$. Thus for any $p \leq q$, $p\absval{\log_2(\frac{p}{q})} \leq \frac{\log_2(e)}{e}$.
\end{proof}


In other words, achieving integrated information of larger than $\frac{\log_2(e)}{e} \approx 0.531$ is not possible by decreasing the probability. Therefore, the mechanism needs to increase the probability to achieve large values of integrated information. All the bounds discussed in Section \ref{sec:mechanism_bounds} are larger than this value, therefore changing $|.|_+$ to $|.|$ does not change our results.

\subsection[\appendixname~\thesection]{Point-wise mutual information}
Another candidate for the distance measure is the point-wise mutual information, \emph{i.e.,} $\pospart{\log(\frac{p}{q})}$. Similar to the measure used in the definition of the integrated information, this measure is not an aggregate over the states and quantifies the change in an individual state. Furthermore, it differs from $0$ only if $p > q$. This measure has been used previously to quantify actual causation is discrete systems \cite{Albantakis2019WhatNetworks}. Using this measure does not change our results because the bounds derived in Section \ref{sec:mechanism_bounds} were all derived for the informativeness term of difference measure, which has the form $\pospart{\log(\frac{p}{q})}$. 

\subsection[\appendixname~\thesection]{Kullback–Leibler divergence}
Kullback–Leibler divergence (KLD) measure is another common distance measure for probability distribution and is defined as $D_{KL}(p(X) || q(X)) = \sum_{x \in \Omega_X} p(X=x) \log(\frac{p(X=x)}{q(X=x)})$. It is $0$, only if $p(X=x) = q(X=x), \forall x \in \Omega_X$, and is an aggregate measure over all the states, unlike our default measure.

Here, we study the upper bound for $D_{KL}(\pi_e(Z \mid m) || \pi_e^{\theta}(Z \mid m))$ and show that the results in Section \ref{sec:mechanism_bounds} are still relevant even if we use KLD. First, using the definition of the effect repertoire and additivity of KLD, we have:

\begin{adjustwidth}{-2.25in}{0cm}
\begin{equation}
\label{eq:KLD_decomp}
D_{KL}(\pi_e(Z \mid m) || \pi_e^{\theta}(Z \mid m)) = 
D_{KL}(\prod_{Z_i \in Z}\pi_e(Z_i \mid m ) || \prod_{Z_i \in Z} \pi_e^{\theta}(Z_i \mid m )) =
\sum_{Z_i \in Z} D_{KL}(\pi_e(Z_i \mid m ) || \pi_e^{\theta}(Z_i \mid m )).
\end{equation}
\end{adjustwidth}
Let us define the mechanism connected to $Z_i$ after the partitioning as $M_i$. For each individual binary unit in the purview $Z_i$, we can use the proof in lemma \ref{lem:pi_ratio} to show that:
$$
\begin{aligned}
\frac{\pi_e(Z_i= z_i  \mid m)}{\pi_e(Z_i = z_i \mid m_i)} \leq  2^{|M| - |M_i| }.
\end{aligned}
$$
Therefore, finding the maximum value for an individual binary unit boils down to solving the following problem:
$$
\begin{aligned}
\max_{p,q} \quad 
& D_{KL}(p || q) = p\log(\frac{p}{q}) + (1-p)\log(\frac{1-p}{1-q})\\
\text{subject to} \quad  
& p \leq 2^{|M| - |M_i| } q \\
& q \leq p \leq 1.
\end{aligned}
$$
The constraint $q \leq p$ is added without loss of generality, as it always holds for one of the two states. We are denoting the probabilities of that state with $p$ and $q$. Under this constraint the derivative of the objective function with respect to $p$ is always non-negative:
$$
\frac{\partial D_{KL}(p || q)}{\partial p} = 
\log\left(\frac{p}{q}\right) - \log\left(\frac{1-p}{1-q}\right) =
\log\left(\frac{\frac{1}{q} - 1}{\frac{1}{p} - 1}\right) \geq 0
$$
This means the maximum of $D_{KL}(p || q)$ occurs at the boundary for which $p$ is the largest, which is $p = \min\{1, 2^{|M| - |M_i| } q\}$. In other words, $p = 2^{|M| - |M_i| } q$ for $q \leq \frac{1}{2}^{|M| - |M_i| }$ and $p = 1$, otherwise. For the latter case, we have $D_{KL}(p || q) = \log(\frac{1}{q})$, whose maximum occurs when $q$ is minimized, \emph{i.e.,} $q = \frac{1}{2}^{|M| - |M_i| }$, with the maximal KLD value of $|M| - |M_i|$. The maximum for the former case also happens at the same point, since the derivative of the objective function with respect to $q$ is non-negative:
$$
\frac{\partial}{\partial q} 
\left(
 2^{|M| - |M_i| } q (|M| - |M_i|)
 +
 (1-2^{|M| - |M_i| } q)\log(\frac{1-2^{|M| - |M_i| } q}{1-q})
\right)
\geq 0,
$$
and we need to maximize $q$ in order to maximize $D_{KL}(p || q)$. This again leads to the maximum value of $|M| - |M_i|$ at $q = \frac{1}{2}^{|M| - |M_i| }$.
Thus, we have:
$$
D_{KL}(\pi_e(Z_i \mid m ) || \pi_e(Z_i \mid m _i))
\leq |M| - |M_i|,
$$
which is the same bound as Lemma \ref{lem:pi_ratio}, but for KLD instead of our default distance measure. $|M| - |M_i|$ is the number of connection severed  form $Z_i$. We can plug this result into \eqref{eq:KLD_decomp} and arrive at:
$$
D_{KL}(\pi_e(Z \mid m) || \pi_e^{\theta}(Z \mid m)) = 
\sum_{Z_i \in Z} D_{KL}(\pi_e(Z_i \mid m ) || \pi_e^{\theta}(Z_i \mid m ))
\leq \sum_{Z_i \in Z} (|M| - |M_i|)
= \mathcal{N}(\theta).
$$
$\mathcal{N}(\theta)$ is the number of connections severed  by the partition $\theta$. This is the same result as Lemma \ref{lem:num_connections} and shows that even if we use KLD as the distance measure, the normalization factor for finding the MIP would not change. This can also be used to show that if we use KLD to quantify $\varphi_e$, it cannot be larger than the number of connections between the mechanism and the purview, \emph{i.e.,} $\varphi_e(m, E) \leq |M||E|$ (same results as Theorem \ref{thm:single_mechanism}).

Furthermore, since KLD is maximized when $\pi_e(Z \mid m)$ is fully deterministic, Lemma \ref{lem:pi_1_necessary} and Theorem \ref{thm:effect_interlevel} hold. Finally, the assumption for Theorem \ref{thm:order_k_fully_det_effect} is that the effect repertoire is fully deterministic, \emph{i.e.,} $\pi_e(Z=z \mid m) = 1$ for some $z \in \Omega_Z$. In this case, KLD is simplified to $\log(\frac{1}{\pi_e^{\theta}(Z=z \mid m)}) = \pi_e(Z=z \mid m)\pospart{\log(\frac{\pi_e(Z=z \mid m)}{\pi_e^{\theta}(Z=z \mid m)})}$, which coincides with our default distance measure. Therefore, Theorem \ref{thm:order_k_fully_det_effect} holds as well.

\newpage
\section[\appendixname~\thesection]{Proofs}
\label{app:proofs}
\subsection[\appendixname~\thesection]{Single mechanism}
Before presenting our main proofs, let us revisit the definition of single-unit effect repertoire. Given the set of units outside the mechanism $W = S-M$ and a single unit $Z_i$ in state $z_i$:
$$
\pi_e(Z_i = z_i \mid m ) = \frac{1}{2^{|W|}} \sum_w p(Z_i  = z_i \mid m , w) = \frac{1}{|\mathcal{M}(m)|} \sum_{s \in \mathcal{M}(m)} p(Z_i = z_i \mid s),
$$
where $\mathcal{M}(m) \subset \Omega_S$ is the set of system states in which mechanism $M$ is in state $m$ and we have $| \mathcal{M}(m) | = 2^{|W|} = 2^{N-|M|}$. $\Omega_S$ is the set of all possible states of the system $S$. Defining $\mathcal{M}(m)$ helps us to present our results in a more concise manner. For example, we will use the following simple lemma to prove Lemma \ref{lem:pi_ratio}

\begin{Lemma}
\label{lem:M_subset}
For two mechanisms $M$ and $\bar{M}$ such that $ \bar{M}\subset M \subseteq S$, we have $  \mathcal{M}(m) \subset \mathcal{M}(\bar{m})$.
\end{Lemma}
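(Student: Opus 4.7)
The plan is essentially unwinding the definition of $\mathcal{M}(\cdot)$ together with the observation that, since $\bar{M} \subset M$, the state $\bar{m}$ of $\bar{M}$ is (by convention in the paper) the restriction of $m$ to the units in $\bar{M}$. I would first establish the inclusion $\mathcal{M}(m) \subseteq \mathcal{M}(\bar{m})$ and then promote it to a strict inclusion by a cardinality argument.

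For the inclusion, I take an arbitrary $s \in \mathcal{M}(m)$. By definition $s$ agrees with $m$ on every unit of $M$, and in particular on every unit of $\bar{M} \subset M$. Restricting $s$ to $\bar{M}$ therefore gives $\bar{m}$, so $s \in \mathcal{M}(\bar{m})$. This is the only content of the containment direction.

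For strictness, I invoke the counts already implicit in the definition: $|\mathcal{M}(m)| = 2^{N - |M|}$ and $|\mathcal{M}(\bar{m})| = 2^{N - |\bar{M}|}$. The hypothesis $\bar{M} \subsetneq M$ gives $|\bar{M}| < |M|$, hence $|\mathcal{M}(\bar{m})| > |\mathcal{M}(m)|$, which together with the set inclusion forces the inclusion to be proper. Alternatively, and more constructively, I can exhibit an explicit witness: pick any unit $u \in M \setminus \bar{M}$ (nonempty by strict containment), take any $s \in \mathcal{M}(m)$, and flip the value of $u$ in $s$ to produce $s' \in \mathcal{M}(\bar{m}) \setminus \mathcal{M}(m)$, since $s'$ still agrees with $\bar{m}$ on $\bar{M}$ but no longer agrees with $m$ on $u$.

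There is essentially no obstacle here; the only subtle point is making sure the notational convention that $\bar{m}$ denotes the restriction of $m$ to $\bar{M}$ (rather than an independent state symbol) is stated explicitly at the start of the proof, so that the inclusion direction is not vacuous. Once that is noted, the argument is a two-line verification.
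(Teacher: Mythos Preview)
Your proposal is correct and follows essentially the same approach as the paper: both argue that any $s \in \mathcal{M}(m)$ agrees with $m$ on all units of $M$, hence in particular on the units of $\bar{M} \subset M$, so $s \in \mathcal{M}(\bar{m})$. Your version is slightly more thorough in that you explicitly justify the strictness of the inclusion (via cardinalities or an explicit witness), whereas the paper simply asserts $\subset$ after establishing $\subseteq$.
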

\begin{proof}
Without loss of generality let us assume $M$ is in all-zero state. By definition, $\forall s \in \mathcal{M}(m)$ the units corresponding to $M$ are in all-zero state. Since $\bar{M} \subset M$, $\forall s \in \mathcal{M}(m)$ the units corresponding to $\bar{M}$ are also in all-zero state and therefore $s \in \mathcal{M}(\bar{m})$. Thus, $\forall s \in \mathcal{M}(m)$, we have $s \in \mathcal{M}(\bar{m})$, which means $  \mathcal{M}(m) \subset \mathcal{M}(\bar{m})$.
\end{proof}

\begin{Example}
$\bar{M} = A$ and $M=AB$ for system $S = ABC$ in all-zero state. $\mathcal{M}(\bar{m}) = \{ ABC=000, ABC=001, ABC=010, ABC=011\}$ and $\mathcal{M}(m) = \{ ABC=000, ABC=001\}$.
\end{Example}

\piratiolem*
\begin{proof}
For a single purview unit $Z_i \in Z$ we have:
$$
\begin{aligned}
\frac{\pi_e(Z_i = z_i \mid m)}{\pi_e(Z_i = z_i \mid \bar{m})} = \frac{\frac{1}{|\mathcal{M}(m)|} \sum_{s \in \mathcal{M}(m)}p(Z_i = z_i \mid s)}{\frac{1}{|\mathcal{M}(\bar{m})|} \sum_{s \in \mathcal{M}(\bar{m})}p(Z_i= z_i  \mid s)}
\end{aligned}
$$
Since $\bar{M} \subset M$, we have $  \mathcal{M}(m) \subset \mathcal{M}(\bar{m})$ (using Lemma \ref{lem:M_subset}), therefore $\sum_{s \in \mathcal{M}(m)}p(Z_i = z_i \mid s) \leq \sum_{s \in \mathcal{M}(\bar{m})}p(Z_i = z_i \mid s)$. This is because we are taking the sum over more terms on the right side. Thus
$$
\begin{aligned}
\frac{\pi_e(Z_i= z_i  \mid m)}{\pi_e(Z_i = z_i \mid \bar{m})} \leq \frac{\frac{1}{|\mathcal{M}(m)|}} {\frac{1}{|\mathcal{M}(\bar{m})|}} = \frac{\frac{1}{2^{N-|M|}} }{\frac{1}{2^{N-|\bar{M}|}} } = 2^{|M| - |\bar{M}| }.
\end{aligned}
$$
\end{proof}
\numconnectionslem*
\begin{proof}
Due to the conditional independence of the purview units, we have:
$$
\log_2(\frac{\pi_e(Z = z \mid m )}{\pi_e^\theta(Z = z \mid m )}) 
= \log_2(\frac{\prod_i \pi_e(Z_i = z_i \mid m)}{\prod_i \pi_e^\theta(Z_i = z_i \mid m)}) 
= \sum_{Z_i \in Z} \log_2(\frac{\pi_e(Z_i = z_i \mid m)}{\pi_e^\theta(Z_i = z_i \mid m)}) 
$$
Let us define $M_i$ as the mechanism connected to $Z_i$ after the partitioning. Using Lemma \ref{lem:pi_ratio}:
$$
\sum_{Z_i \in Z} \log_2(\frac{\pi_e(Z_i = z_i \mid m)}{\pi_e^\theta(Z_i = z_i \mid m)}) 
\leq \sum_{Z_i \in Z} (|M| - |M_i|) = \mathcal{N}(\theta).
$$
$\mathcal{N}(\theta) = \sum_{Z_i \in Z} (|M| - |M_i|)$ is the total number of connections cut by partition $\theta$.
\end{proof}

\singlemechthm*
\begin{proof}
For the effect side, we have:
$$
\small
\begin{aligned}
\varphi_e(m, E) 
&=  \pi_e(E = e \mid m )  
\pospart{ \log_2(\frac{\pi_e(E = e \mid m )}{\pi_e^{\theta'}(E = e \mid m )}) }
\leq \pospart{ \log_2(\frac{\pi_e(E = e \mid m )}{\pi_e^{\theta'}(E = e \mid m )}) } \\
&\leq \max_\theta
\pospart{ \log_2(\frac{\pi_e(E = e \mid m )}{\pi_e^{\theta}(E = e \mid m )}) }
\overset{(a)}{\leq} \max_\theta \mathcal{N}(\theta)
\leq |M||E|
\end{aligned}
$$
$\theta'$ represents the MIP and equality (a) follows from Lemma \ref{lem:num_connections}.
Similarly for the cause side, we have:
$$
\small
\begin{aligned}
\varphi_c(m, C) 
&=  \pi_c(C = c \mid m )  
\pospart{ \log_2(\frac{\pi_e(M = m \mid c)}{\pi_e^{\theta'}(M = m \mid c)}) }
\leq \pospart{ \log_2(\frac{\pi_e(M = m \mid c)}{\pi_e^{\theta'}(M = m \mid c)}) } \\
&\leq \max_\theta
\pospart{ \log_2(\frac{\pi_e(M = m \mid c)}{\pi_e^{\theta}(M = m \mid c)}) }
\leq \max_\theta \mathcal{N}(\theta)
\leq |M||C|
\end{aligned}
$$
\end{proof}

\pionenecessary*
\begin{proof}
In deriving the upper bounds in Theorem \ref{thm:single_mechanism}, the upper bound is achieved by setting  $\pi_e(E = e \mid m )$ and $\pi_c(C = c \mid m )$ to $1$. Using the same line of proof and setting $\pi_e(E = e \mid m ) < 1$ and $\pi_c(C = c \mid m ) < 1$, we will achieve $\varphi_e(m, E) < |M||E|$ and $\varphi_c(m, C) < |M||C|$. Thus, to achieve $\varphi_e(m, E) = |M||E|$ ($\varphi_c(m, C) = |M||C|$), we need to have $\pi_e(E = e \mid m ) = 1$ ($\pi_c(C = c \mid m ) = 1$).
\end{proof}

\subsection[\appendixname~\thesection]{Inter-order constraints}

To prove Theroem \ref{thm:effect_interlevel}, we first need to prove a few intermediate lemmas. 
\begin{Lemma}[Deterministic mechanism]
\label{lem:det_mech}
For a mechanism $M \subseteq S$ and a single-unit purview $Z_i$, if $\pi_e(Z_i=z_i \mid m )=1$, then $p(Z_i=z_i \mid s)=1, \forall s \in \mathcal{M}(m)$. Furthermore, if $\pi_e(Z_i=z_i \mid m )=0$, then $p(Z_i=z_i \mid s)=0, \forall s \in \mathcal{M}(m)$.
\end{Lemma}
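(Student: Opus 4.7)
The plan is to unfold the definition of the single-unit effect repertoire and then exploit the fact that an average of numbers constrained to the unit interval $[0,1]$ attains either of its extreme values only when every summand does. Concretely, I would write
\[
\pi_e(Z_i = z_i \mid m) \;=\; \frac{1}{|\mathcal{M}(m)|} \sum_{s \in \mathcal{M}(m)} p(Z_i = z_i \mid s),
\]
and observe that each term $p(Z_i = z_i \mid s)$ is a probability, hence lies in $[0,1]$, while the coefficients $1/|\mathcal{M}(m)|$ are strictly positive.

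For the first claim, assume $\pi_e(Z_i = z_i \mid m) = 1$ and suppose, for contradiction, there exists $s^\star \in \mathcal{M}(m)$ with $p(Z_i = z_i \mid s^\star) < 1$. Then
\[
\frac{1}{|\mathcal{M}(m)|} \sum_{s \in \mathcal{M}(m)} p(Z_i = z_i \mid s) \;<\; \frac{1}{|\mathcal{M}(m)|} \cdot |\mathcal{M}(m)| \;=\; 1,
\]
contradicting the hypothesis. So every term must equal $1$. The second claim is symmetric: if $\pi_e(Z_i = z_i \mid m) = 0$ and any summand were strictly positive, the convex combination would also be strictly positive, again a contradiction; hence every $p(Z_i = z_i \mid s)$ must vanish.

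There is essentially no obstacle here since the argument is a one-line application of the boundary behavior of convex combinations in $[0,1]$; the only modeling point worth flagging explicitly is that the formula giving $\pi_e(Z_i = z_i \mid m)$ as a uniform average over $\mathcal{M}(m)$ is precisely the causal marginalization already stated just before Lemma~\ref{lem:pi_ratio}, so no additional machinery is invoked. The lemma will then be used downstream to propagate determinism from a mechanism to its supersets, which is exactly the content needed for Lemma~\ref{lem:superset_det} and the inter-order constraints.
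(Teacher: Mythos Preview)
Your proposal is correct and essentially identical to the paper's own proof: both unfold the definition of $\pi_e(Z_i=z_i\mid m)$ as the uniform average $\frac{1}{|\mathcal{M}(m)|}\sum_{s\in\mathcal{M}(m)}p(Z_i=z_i\mid s)$ and then use that an average of numbers in $[0,1]$ attains $1$ (resp.\ $0$) only if every summand does. The paper phrases this directly (sum equals $|\mathcal{M}(m)|$, each term $\leq 1$, hence all equal $1$) while you phrase it by contradiction, but there is no substantive difference.
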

\begin{proof}

For $\pi_e(Z_i=z_i \mid m )=1$:
$$
\pi_e(Z_i=z_i \mid m ) = \frac{1}{|\mathcal{M}(m)|} \sum_{s \in \mathcal{M}(m)} p(Z_i = z_i \mid s) = 1.
$$
Therefore, $\sum_{s \in \mathcal{M}(m)} p(Z_i = z_i \mid s) = |\mathcal{M}(m)|$. Since $p(Z_i = z_i \mid s) \leq 1, \forall s$, all the terms in the summation need to be $1$, \emph{i.e.,} $p(Z_i=z_i \mid s)=1, \forall s \in \mathcal{M}(m)$.

For $\pi_e(Z_i=z_i \mid m )=0$, $\sum_{s \in \mathcal{M}(m)} p(Z_i = z_i \mid s) = 0$. Since $p(Z_i = z_i \mid s) \geq 0, \forall s$, all the terms in the summation need to be $0$, \emph{i.e.,} $p(Z_i=z_i \mid s)=0, \forall s \in \mathcal{M}(m)$.
\end{proof}

\supersetdetlem*
\begin{proof}
If $\pi_e(Z_i=z_i \mid m )=1$:

According to Lemma \ref{lem:det_mech}, $p(Z_i=z_i \mid s)=1, \forall s \in \mathcal{M}(m)$. Since $M \subset \bar{M}$, we have $\mathcal{M}(\bar{m}) \subset \mathcal{M}(m)$ (Lemma \ref{lem:M_subset}). Therefore, $p(Z_i=z_i \mid s)=1, \forall s \in \mathcal{M}(\bar{m})$ and $\pi_e(Z_i=z_i \mid \bar{m}) = \frac{1}{|\mathcal{M}(\bar{m})|} \sum_{s \in \mathcal{M}(\bar{m})} p(Z_i = z_i \mid s) = 1$.

Similarly, for $\pi_e(Z_i=z_i \mid m )=0$: 

According to Lemma \ref{lem:det_mech}, $p(Z_i=z_i \mid s)=0, \forall s \in \mathcal{M}(m)$. Since $M \subset \bar{M}$, we have $\mathcal{M}(\bar{m}) \subset \mathcal{M}(m)$ (Lemma \ref{lem:M_subset}). Therefore, $p(Z_i=z_i \mid s)=0, \forall s \in \mathcal{M}(\bar{m})$ and $\pi_e(Z_i=z_i \mid \bar{m}) = \frac{1}{|\mathcal{M}(\bar{m})|} \sum_{s \in \mathcal{M}(\bar{m})} p(Z_i = z_i \mid s) = 0$.
\end{proof}

\begin{Lemma}
\label{lem:inter_level_upward}
If $\varphi_e(m, Z) = |M||Z|$ then $\varphi_e(\bar{m}, \bar{Z}) < |\bar{M}||\bar{Z}|, $ if $ M \subset \bar{M}$ and $Z \cap \bar{Z} \neq \varnothing$.
\end{Lemma}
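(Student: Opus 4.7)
The plan is to argue by contradiction: assume $\varphi_e(\bar{m}, \bar{Z}) = |\bar{M}||\bar{Z}|$ and exhibit a valid partition of $(\bar{M}, \bar{Z})$ whose normalized informativeness is strictly less than one, contradicting the fact that saturating the bound would force the MIP to have normalized informativeness exactly one.

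First I would translate both hypotheses into single-unit determinism. By Lemma \ref{lem:pi_1_necessary} and the product form \eqref{eq:pi_e}, the assumption $\varphi_e(m, Z) = |M||Z|$ yields $\pi_e(Z_i = z_i \mid m) = 1$ for every $Z_i \in Z$, and the assumed equality $\varphi_e(\bar{m}, \bar{Z}) = |\bar{M}||\bar{Z}|$ likewise yields $\pi_e(Z_j = \bar{z}_j \mid \bar{m}) = 1$ for every $Z_j \in \bar{Z}$. Since $M \subset \bar{M}$, Lemma \ref{lem:superset_det} propagates the first statement to $\pi_e(Z_i = z_i \mid \bar{m}) = 1$ for every $Z_i \in Z$. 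Restricting to $Y = Z \cap \bar{Z}$ and using the fact that only one state of each binary unit can have probability one, this forces $\bar{z}_i = z_i$ for all $Z_i \in Y$.

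With the two deterministic states aligned on $Y$, I would test the bipartition $\theta = \{(M, Y),\; (\bar{M} \setminus M,\; \bar{Z} \setminus Y)\}$, which is valid because $\bar{M} \setminus M$ and $Y$ are both nonempty. Under $\theta$ the partitioned repertoire factorizes across the two parts, and the $Y$-factor becomes $\pi_e^\theta(Y = z_Y \mid \bar{m}) = \pi_e(Y = z_Y \mid m) = 1 = \pi_e(Y = z_Y \mid \bar{m})$, so it cancels from the ratio. What remains is $\pi_e(\bar{z}_{\bar{Z} \setminus Y} \mid \bar{m}) / \pi_e(\bar{z}_{\bar{Z} \setminus Y} \mid \bar{m} \setminus m)$, whose log is bounded by $|M|\cdot|\bar{Z} \setminus Y|$ via Lemma \ref{lem:num_connections}. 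However, $\mathcal{N}(\theta) = |M|\cdot|\bar{Z} \setminus Y| + |\bar{M} \setminus M|\cdot|Y|$, so the additional $|\bar{M} \setminus M|\cdot|Y| \geq 1$ cut connections contribute no information gain, and the normalized informativeness of $\theta$ is strictly less than one.

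To close the argument: $\varphi_e(\bar{m}, \bar{Z}) = |\bar{M}||\bar{Z}|$ would force the MIP to saturate Lemma \ref{lem:num_connections}, i.e., sever every connection and achieve normalized value exactly one, contradicting the existence of a partition with normalized value strictly below one. The main subtlety I expect is careful bookkeeping of the factorized effect repertoire, ensuring that the ``wasted'' cut connections from $\bar{M} \setminus M$ to $Y$ truly produce zero informativeness rather than some residual positive contribution. A minor edge case is $\bar{Z} \subseteq Z$ (so $\bar{Z} \setminus Y = \emptyset$), where one block of the bipartition has empty purview; here the $\theta$-informativeness collapses to exactly zero while $\mathcal{N}(\theta) = |\bar{M} \setminus M|\cdot|Y| > 0$, so the normalized value is zero and the contradiction goes through \emph{a fortiori}.
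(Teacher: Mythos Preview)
Your argument is correct, but it follows a genuinely different route from the paper's proof. The paper does not argue by contradiction and does not construct an auxiliary bipartition. Instead, it derives from $\pi_e(Z_i=z_i\mid m)=1$ and Lemma~\ref{lem:det_mech} the lower bound $\pi_e(Z_i=z_i)\ge 2^{-|M|}$ for every $Z_i\in Z$, and then bounds the informativeness of the \emph{complete} partition of $(\bar M,\bar Z)$ directly: the contribution of each unit in $\bar Z\cap Z$ is at most $|M|$ rather than $|\bar M|$, giving $|\bar Z\cap Z|\,|M|+|\bar Z\setminus Z|\,|\bar M|<|\bar M||\bar Z|$. Since by Lemma~\ref{lem:num_connections} the complete partition is the only one whose unnormalized informativeness could reach $|\bar M||\bar Z|$, no partition attains the bound and the conclusion follows.

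Your approach instead leverages Lemma~\ref{lem:superset_det} to identify ``wasted'' cuts: the connections from $\bar M\setminus M$ into $Y=Z\cap\bar Z$ carry zero informativeness because $M$ alone already determines $Y$. This yields a partition with normalized value strictly below one, and then MIP minimality delivers the contradiction. The trade-off is that the paper's argument is more self-contained (it does not invoke MIP minimality, only the $\mathcal N(\theta)$ bound) and gives an explicit quantitative shortfall $|\bar Z\cap Z|(|\bar M|-|M|)$ at the complete partition, whereas your argument is conceptually cleaner about the \emph{mechanism} of failure---redundant determinism inherited from the subset---and sidesteps the state-counting step for the unconstrained probability.
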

\begin{proof}
Since $\varphi_e(m, Z) = |M||Z|$, using Lemma \ref{lem:pi_1_necessary}:
$$
\small
\begin{aligned}
&\pi_e(Z=z^* \mid m ) = \prod_{Z_i \in Z} \pi_e(Z_i=z_i^* \mid m ) = 1 \\
\implies
&\pi_e(Z_i=z_i^* \mid m ) = 1, \forall Z_i \in Z\\
\overset{(a)}{\implies} 
& p(Z_i=z_i \mid s)=1, \forall s \in \mathcal{M}(m), \forall Z_i \in Z\\
\implies
& \pi_e(Z_i=z_i^*) = \frac{1}{|\Omega_S|} \sum_{s \in \Omega_S} p(Z_i = z_i \mid s)
\geq \frac{|\mathcal{M}(m)|}{|\Omega_S|}
= \frac{2^{N - |M|}}{2^N} = \frac{1}{2^{|M|}}, \forall Z_i \in Z, 
\end{aligned}
$$

From Theorem \ref{thm:single_mechanism}, we already know $\varphi_e(\bar{m}, \bar{Z}) \leq |\bar{M}||\bar{Z}|$ and from Lemma \ref{lem:num_connections}, we know the only partitioning that can achieve $|\bar{M}||\bar{Z}|$ is the complete partition, \emph{i.e.,} removing all the connections between $\bar{M}$ and $\bar{Z}$.
Now, we show even the complete partition cannot achieve $|\bar{M}||\bar{Z}|$. Under the complete partition, $\theta = \{ (\varnothing, \bar{M}), (\bar{Z}, \varnothing) \}$, we have:

\begin{adjustwidth}{-2.25in}{0cm}
$$
\small
\begin{aligned}
\varphi_e(\bar{m}, \bar{Z}, \theta)
& = \pi_e(\bar{Z} = \bar{z} \mid \bar{m}) 
\pospart{ \log_2 (\frac{\pi_e(\bar{Z} = \bar{z} \mid \bar{m})}{\pi_e(\bar{Z} = \bar{z})}) }
\leq 
\pospart{ \log_2 (\frac{\pi_e(\bar{Z} = \bar{z} \mid \bar{m})}{\pi_e(\bar{Z} = \bar{z})}) } \\
&= 
\pospart{ \sum_{Z_i \in \bar{Z}} \log_2 (\frac{\pi_e(\bar{Z}_i = \bar{z}_i \mid \bar{m})}{\pi_e(\bar{Z}_i = \bar{z}_i)}) }
\leq 
\pospart{ \sum_{Z_i \in \bar{Z} \cap Z} \log_2 (\frac{\pi_e(\bar{Z}_i = \bar{z}_i \mid \bar{m})}{\pi_e(\bar{Z}_i = \bar{z}_i)}) }
+ 
\pospart{ \sum_{Z_i \in \bar{Z} - Z} \log_2 (\frac{\pi_e(\bar{Z}_i = \bar{z}_i \mid \bar{m})}{\pi_e(\bar{Z}_i = \bar{z}_i)}) } \\
&\overset{(a)}{\leq} |\bar{Z} \cap Z||M| + |\bar{Z} - Z||\bar{M}| < |\bar{Z} \cap Z||\bar{M}| + |\bar{Z} - Z||\bar{M}| = |\bar{Z}||\bar{M}|
\end{aligned}
$$
\end{adjustwidth}
The first term in (a) follows from the fact that $\pi_e(Z_i=z_i^*) \geq \frac{1}{2^{|M|}}, \forall Z_i \in Z$, as we proved earlier, and the second term follows from Lemma \ref{lem:pi_ratio}.
\end{proof}

\begin{Lemma}
\label{lem:inter_level_downward}
$\varphi_e(\bar{m}, \bar{Z}) < |\bar{M}||\bar{Z}|, $ if $\varphi_e(m, Z) = |M||Z|$ and $\bar{M} \subset M$ and $Z \cap \bar{Z} \neq \varnothing$.
\end{Lemma}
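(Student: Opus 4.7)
My plan is to proceed by contradiction. Assuming $\varphi_e(\bar{m}, \bar{Z}) = |\bar{M}||\bar{Z}|$ in addition to the hypothesis $\varphi_e(m, Z) = |M||Z|$, I will extract incompatible tightness conditions on the unconstrained probability of a shared purview unit. The direct bounding argument used in Lemma \ref{lem:inter_level_upward} does not port over: when $\bar{M} \subset M$, a unit $Z_i \in Z \cap \bar{Z}$ only yields the per-unit log-ratio bound $|M|$ coming from $\pi_e(Z_i = z_i^*) \geq 2^{-|M|}$, and since $|M| > |\bar{M}|$ this exceeds rather than improves upon $|\bar{M}|$. A tightness-based contradiction is therefore the cleaner route.

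First I would translate each equality hypothesis into pointwise identities. Lemma \ref{lem:pi_1_necessary} gives $\pi_e(Z_i = z_i^* \mid m) = 1$ for all $Z_i \in Z$ and $\pi_e(Z_i = \bar{z}_i^* \mid \bar{m}) = 1$ for all $Z_i \in \bar{Z}$. Furthermore, $\varphi_e(m, Z) = |M||Z|$ is only compatible with the MIP being the complete partition, since Lemma \ref{lem:num_connections} caps the unpartitioned-to-partitioned log ratio by $\mathcal{N}(\theta)$ and only the complete partition achieves $\mathcal{N}(\theta) = |M||Z|$. Writing $\pi_e^{\theta}(z^* \mid m) = \prod_i \pi_e(Z_i = z_i^*)$ for this partition and combining with the per-unit bound $\pi_e(Z_i = z_i^*) \geq 2^{-|M|}$ (Lemma \ref{lem:pi_ratio} applied with $\bar{M} = \varnothing$) forces $\pi_e(Z_i = z_i^*) = 2^{-|M|}$ for every $Z_i \in Z$. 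The symmetric reasoning on $(\bar{m}, \bar{Z})$ gives $\pi_e(Z_i = \bar{z}_i^*) = 2^{-|\bar{M}|}$ for every $Z_i \in \bar{Z}$.

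Next I would reconcile the two determinism statements on a common purview unit. Fix any $Z_i \in Z \cap \bar{Z}$, which exists by hypothesis. Lemma \ref{lem:det_mech} converts the determinism of $\pi_e$ into $p(Z_i = z_i^* \mid s) = 1$ for all $s \in \mathcal{M}(m)$ and $p(Z_i = \bar{z}_i^* \mid s) = 1$ for all $s \in \mathcal{M}(\bar{m})$. Lemma \ref{lem:M_subset} combined with $\bar{M} \subset M$ gives $\mathcal{M}(m) \subset \mathcal{M}(\bar{m})$, so any $s \in \mathcal{M}(m)$ assigns probability one to both $z_i^*$ and $\bar{z}_i^*$; binarity of $Z_i$ then forces $\bar{z}_i^* = z_i^*$. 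The two unconstrained identities for $Z_i$ now refer to the same state and evaluate to the same quantity, yielding $2^{-|M|} = \pi_e(Z_i = z_i^*) = 2^{-|\bar{M}|}$ and hence $|M| = |\bar{M}|$, contradicting $\bar{M} \subsetneq M$.

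The main conceptual obstacle is recognizing that the direct bounding technique from Lemma \ref{lem:inter_level_upward} is unavailable here and that equality in the $\varphi_e$ upper bound must be exploited on both $(m, Z)$ and $(\bar{m}, \bar{Z})$ simultaneously. Once that is in place, the work reduces to observing that a single shared purview unit receives two independent tight lower bounds $2^{-|M|}$ and $2^{-|\bar{M}|}$ on the same unconstrained probability from the two deterministic mechanisms, bounds that can coexist only when $|M| = |\bar{M}|$.
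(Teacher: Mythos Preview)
Your argument is correct, but it is considerably more work than needed. The paper's proof is a two-line contradiction that simply \emph{swaps the roles} of $(m,M,Z)$ and $(\bar{m},\bar{M},\bar{Z})$ in Lemma~\ref{lem:inter_level_upward}: assuming $\varphi_e(\bar{m},\bar{Z}) = |\bar{M}||\bar{Z}|$, since $\bar{M} \subset M$ and $Z\cap\bar{Z}\neq\varnothing$, Lemma~\ref{lem:inter_level_upward} (applied with $(\bar{m},\bar{Z})$ playing the role of the maximally irreducible pair and $M \supset \bar{M}$ playing the role of the superset) immediately gives $\varphi_e(m,Z) < |M||Z|$, contradicting the hypothesis. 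You explicitly noted that the \emph{bounding} computation inside Lemma~\ref{lem:inter_level_upward} does not port over to the $\bar{M}\subset M$ direction, and that observation is correct; but you overlooked that the \emph{conclusion} of Lemma~\ref{lem:inter_level_upward} itself can simply be invoked with the labels exchanged.

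What your route does buy is independence from Lemma~\ref{lem:inter_level_upward}: by extracting from each maximal pair the exact identity $\pi_e(Z_i = z_i^*) = 2^{-|M|}$ (via the observation that the MIP must be the complete partition and each per-unit lower bound $2^{-|M|}$ must be tight), and then reconciling the two identities on a shared purview unit after matching the maximal states through Lemma~\ref{lem:det_mech} and Lemma~\ref{lem:M_subset}, you obtain a self-contained proof that in fact establishes both directions of Theorem~\ref{thm:effect_interlevel} in one stroke. This is a genuinely different and somewhat more informative argument, since it pins down the unconstrained probability exactly rather than only bounding it; but given that Lemma~\ref{lem:inter_level_upward} is already in hand, the paper's role-swap is the cleaner move.
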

\begin{proof}
Proof by contradiction: Assume $\varphi(\bar{m}, \bar{Z}) = |\bar{M}||\bar{Z}|$. Then according to Lemma \ref{lem:inter_level_upward}, $\varphi(m, Z) < |M||Z|$. Contradiction.
\end{proof}

\effectinterlevel*
\begin{proof}
Follows directly from Lemma \ref{lem:inter_level_downward} and Lemma \ref{lem:inter_level_upward}.
\end{proof}

\subsection[\appendixname~\thesection]{Intra-order constraints}
\label{app:intra-proofs}
In this section, we study a system consisting of $N$ units in which all the mechanisms of size $K$ specify themselves with probability $1$:
$$
\pi_e(Z = z' \mid m) = 1, \forall M: |M|=K, Z=M.
$$
First, we show how we can construct a TPM that can satisfy this property. Then, we show why such mechanisms cannot achieve their maximal value. Finally, we show, for such mechanism, MIP can only be one of a few candidate partitions.

Let us denote $M_{n,j}$ and $Z_{n,j}$, $j=1,\dots, \binom{N-1}{K-1}$ and $n=1,\dots,N$ as the set of mechanisms and purviews of size $K$ that contain $n^{\text{th}}$ unit, respectively. From the assumptions of the theorem we also have $Z_{n,j} = M_{n,j}, |Z_{n,j}| = |M_{n,j}| = K, \forall j,n$. Furthermore, $Z_n$ denotes the single-unit purview only containing the $n^{\text{th}}$ unit.
Starting from the assumption of the theorem, we have:
$$
\begin{aligned}
\pi_e(Z_{n,j}=z_{n,j}^* \mid m _{n,j}) &= 1, \forall n,j \implies 
\pi_e(Z_n=z_n^* \mid m _{n,j}) &= 1, \forall n,j
\end{aligned}
$$
This expression means the probability of the $n^{\text{th}}$ purview unit, given all the mechanisms of size $K$ that contain the $n^{\text{th}}$ unit is $1$.
Using Lemma \ref{lem:det_mech}:
$$
p(Z_n=z_n^* \mid s) = 1, \forall s \in \mathcal{M}(m_{n,j}), n = 1,\dots, N, \forall j
$$
Note that in general $z_n^*$, which is the maximal state the purview unit, depends on the mechanism $M_{n,j}$. But the above constraint is only satisfiable when $z_n^*$ is the same for all $j$. To see that, first notice that $\mathcal{M}(m_{n,j}) \cap \mathcal{M}(m_{n,j'}) \neq \varnothing, \forall j,j'$ and for $s \in \mathcal{M}(m_{n,j}) \cap \mathcal{M}(m_{n,j'})$, either we have $p(Z_n=0 \mid s) = 1$ or $p(Z_n=1 \mid s) = 1$. Therefore, $M_{n,j}$ and $M_{n,j'}$ should agree on the state of $Z_n$. This is true for any pair $j,j'$. Without loss of generality, we assume $z_n^*$ is $0$ for all $n$ and $j$.
\begin{equation}
    \label{eq:construction}
    p(Z_n=0 \mid s) = 1,  \forall s \in \bigcup_j \mathcal{M}(m_{n,j}), n = 1,\dots, N.
\end{equation}
Furthermore, without loss of generality, let us assume that the system is also in all-zero state at the current time.
Thus, $\mathcal{M}(m_{n,j})$ is the set of system states in which the state of the units in $M_{n,j}$ (unit $n$ and $K-1$ other units) are $0$ and $\bigcup_j \mathcal{M}(m_{n,j})$ is the set of all the system states for which the state of the unit $n$ and at least $K - 1$ other units are $0$.

So far, we have shown that, for any single purview unit $Z_n$, the TPM entry for $p(Z_n=0 \mid s)$ is $1$ for all system states $s$ that the unit $n$ and at least $K - 1$ other units are $0$. To study the partitioned repertoires, we should also derive $\pi_e(Z_n=0\mid \bar{m})$ for different mechanisms $\bar{m} \subset m_{n,j}$ with $|\bar{M}| < K$.

First, consider the case where $\bar{M}$ contains unit $n$ and $|\bar{M}| < K$:
$$
\pi_e(Z_n=0 \mid \bar{m} = 0) = \frac{1}{|\mathcal{M}(\bar{m})|} 
\sum_{s \in \mathcal{M}(\bar{m})} p(Z_n = 0 \mid s),
$$
Let us denote this probability as $\pi(|\bar{M}|)$. $\mathcal{M}(\bar{m})$ is the set of system states for which unit $n$ and $|\bar{M}| - 1$ other units are $0$ and we are marginalizing over the state of the rest of $N - |\bar{M}|$ units. But as we discussed earlier, we know any state with unit $n$ and at least $K - 1$ other units in state $0$ has $p(Z_n = 0 \mid s) = 1$. So out of the $2^{N - |\bar{M}|}$ states in $\mathcal{M}(\bar{m})$, there are at least $\sum_{b = K - |\bar{M}|}^{N - |\bar{M}|} \binom{N - |\bar{M}|}{b}$ states with probability $1$. Therefore:
$$
\pi(|\bar{M}|) \geq \frac{\sum_{b = K - |\bar{M}|}^{N - |\bar{M}|} \binom{N - |\bar{M}|}{b}}{2^{N - |\bar{M}|}}.
$$
Similarly, we can define the probability $\pi_e(Z_n=0 \mid \bar{m} = 0)$ for the case where $\bar{M}$ does not contain unit $n$ and $ |\bar{M}| < K$ as $\bar{\pi}(|\bar{M}|)$. In this case, $\mathcal{M}(\bar{m})$ is the set of system states for which $|\bar{M}|$ units are $0$ and we are marginalizing over the state of the rest of $N - |\bar{M}|$ units. Again, any state in $\mathcal{M}(\bar{m})$ with unit $n$ and $K - |\bar{M}| - 1$ other units in state $0$ has probability of $1$. Therefore:
$$
\bar{\pi}(|\bar{M}|) \geq \frac{\sum_{b = K - |\bar{M}| - 1}^{N - |\bar{M}| - 1} \binom{N - |\bar{M}| - 1}{b}}{2^{N - |\bar{M}|}} = \frac{\sum_{b = K - |\bar{M}|}^{N - |\bar{M}|} \binom{N - |\bar{M}| - 1}{b-1}}{2^{N - |\bar{M}|}}.
$$
To find the MIP, we need to find the minimum normalized difference between partitioned and unpartitioned repertoires for the pair $Z$ and $M$:
\begin{adjustwidth}{-2.25in}{0cm}
\begin{equation}
\label{eq:parition_sum_intra}
\frac{1}{\mathcal{N}(\theta)} 
\pi_e(Z = z \mid m ) 
\pospart{ \log_2(\frac{\pi_e(Z = z \mid m )}{\pi_e^\theta(Z = z \mid m )}) }
= \frac{1}{\mathcal{N}(\theta)} \log_2(\frac{1}{\prod_i \pi_e^\theta(Z_i = z_i \mid m)}) 
= -\frac{1}{\mathcal{N}(\theta)} \sum_{Z_i \in Z} \log_2(\pi_e(Z_i = z_i \mid m_i)) 
\end{equation}
\end{adjustwidth}

$m_i$ is the mechanism connected to $Z_i$ after the partition and $\pi_e^\theta(Z_i = z_i \mid m_i)$ is either $\pi(|M_i|)$ and $\bar{\pi}(|M_i|))$.
The lower bounds for both $\pi(|\bar{M}|)$ and $\bar{\pi}(|\bar{M}|))$, therefore the upper bound for the sum, are achieved when:
\begin{equation}
    \label{eq:construction_0}
    p(Z_n=0 \mid s) = 0,  \forall s \notin \bigcup_j \mathcal{M}(m_{n,j}), n = 1,\dots, N.
\end{equation}
Eq \eqref{eq:construction} and Eq \eqref{eq:construction_0} provide us with $p(Z_n=0 \mid s)$ for all system states $s$ and purview units $n$, which gives us the full TPM.

We can even further decompose the sum in Eq \eqref{eq:parition_sum_intra} into sum over individual connections. Assume we are removing connections from the mechanism $m$ to the purview unit $z_i$ one by one until we arrive at the mechanism after the partitioning $m_i$. Let us denote the intermediate steps as $m_i^{(0)}, m_i^{(1)}, m_i^{(2)}, \dots, m_i^{(N_i)}$, where $m_i^{(0)} = m$ and $m_i^{(N_i)} = m_i$ and $N_i$ is the number of connections cut from $z_i$. We can write the normalized partitioned informativeness as:
\begin{equation}
\small
\label{eq:parition_sum_connection}
\frac{1}{\mathcal{N}(\theta)} 
\pospart{ \sum_{Z_i \in Z} \sum_{c = 1}^{N_i} \log_2(\frac{\pi_e(Z = z \mid m _i^{(c-1)})}{\pi_e(Z = z \mid m _i^{(c)})}) } 
= \frac{1}{\mathcal{N}(\theta)} 
 \sum_{Z_i \in Z} \sum_{c = 1}^{N_i} \log_2(\frac{\pi_e(Z = z \mid m _i^{(c-1)})}{\pi_e(Z = z \mid m _i^{(c)})}) 
\end{equation}
Positive part operator can be removed as both $\pi(|\bar{M}|)$ and $\bar{\pi}(|\bar{M}|))$ decrease as the size of the mechanism $|\bar{M}|$ increases, making all the terms inside the sum positive. Eq \eqref{eq:parition_sum_connection} rewrites the normalized partitioned informativeness as the average informativeness gain over the individual connections severed  by the partition. 

From Lemma \ref{lem:pi_ratio}, we know the information gain per connection can at most be $1$.  Now, using $\pi(|\bar{M}|)$ and $\bar{\pi}(|\bar{M}|))$, we can make a few observations about the information gain of removing different connections in the system under consideration. First, we show cutting a self connection can achieve the maximum information gain of $1$. 
This means removing the self connection cannot decrease the normalized partitioned informativeness, as adding $1$ to a set of numbers that can at most be $1$ cannot decrease the average. 
To calculate the information gain for cutting a self connection,  we should compare $\pi(|M_i|)$ to $\bar{\pi}(|M_i| - 1)$, \emph{i.e.,} cutting one input connection from unit $Z_i$ such that the mechanism connected to it no longer contains it:
$$
\begin{aligned}
\pi(|M_i|))
&= \frac{\sum_{b = K - |M_i|}^{N - |M_i|} \binom{N - |M_i|}{b}}{2^{N - |M_i|}}
= \frac{2\sum_{b = K - |M_i|}^{N - |M_i|} \binom{N - |M_i|}{b}}{2^{N - |M_i| + 1}}\\
&= 2\frac{\sum_{b = K - (|M_i|-1)-1}^{N - (|M_i|-1)-1} \binom{N - (|M_i|-1)-1}{b}}{2^{N - (|M_i|-1)}} = 2 \bar{\pi}(|M_i| - 1)).
\end{aligned}
$$

This provides us with the proof for Theorem \ref{thm:order_k_fully_det_effect}.
\orderkdet*
\begin{proof}
Achieving the maximum value is only possible if the MIP is the complete partition. Since, as shown above, not having the self connections in the cut for $1 < K < N$ gives us a smaller normalized partitioned informativeness, the complete partition cannot be the MIP.
\end{proof}

We can similarly calculate the informativeness gain for a lateral connection as
$
\log_2(\frac{\pi(|\bar{M}|)}{\pi(|\bar{M}|-1)})
$
or
$
\log_2(\frac{\bar{\pi}(|\bar{M}|)}{\bar{\pi}(|\bar{M}|-1)})
$, depending on if the self connection is intact or not, respectively. In the first case we have:
\begin{adjustwidth}{-2.25in}{0cm}
$$
\small
\log_2(\frac{\pi(|\bar{M}|)}{\pi(|\bar{M}|-1)})
= \log_2(
2
\frac{
\sum_{b = K - |\bar{M}|}^{N - |\bar{M}|} \binom{N - |\bar{M}|}{b})}
{\sum_{b = K - |\bar{M}| + 1}^{N - |\bar{M}| + 1} \binom{N - |\bar{M}| + 1}{b}}
)
= 1 + \log_2(
\frac{
\sum_{b = 0}^{N - K} \binom{N - |\bar{M}|}{b})}
{\sum_{b = 0}^{N - K} \binom{N - |\bar{M}| + 1}{b}}
)
= 1 - \log_2(
\frac
{\sum_{b = 0}^{N - K} \binom{N - |\bar{M}| + 1}{b}}
{\sum_{b = 0}^{N - K} \binom{N - |\bar{M}|}{b})}
).
$$
\end{adjustwidth}
This shows the information gain of $1$ is only achievable for the special case of $K = N$, which is the trivial case of no intra-order trade-off. Furthermore, since the ratio inside $\log$ is decreasing with $N - |\bar{M}|$, the information gain increases as we cut more lateral connections. Same results hold for $
\log_2(\frac{\bar{\pi}(|\bar{M}|)}{\bar{\pi}(|\bar{M}|-1)})
$ as well.

This helps us to narrow down the scope of our search for the MIP. Consider the cut where one mechanism unit is removed from all the $K$ purview units. This cut removes the minimum number of connections among all the valid cuts. It removes $K$ connections, $1$ self connection and $K-1$ lateral connections. Any other valid cut that removes at least one self connection can only increase the average, compared to this cut. This is because, as shown above, cutting more self connections can only increase the average, except the special case of $K=N$, and cutting more lateral connections from a purview unit also only increases the average. Therefore, the special cut discussed above has smaller average information gain, compared to any other cut that removes at least one self connection. This narrows our search to only this cut and the cuts that do not remove self connections. 

We can further show that among the partitions that do not cut self connections bi-partitions have the smallest average information gain. To see this, starting from any partition with more than two parts, merging the two smallest parts decreases the average. This is because the purview units in the smallest parts have the maximum number of connections removed from them. As already shown, the information gain increases as we cut more lateral connections, therefore by merging the smallest parts we can avoid removing the connections with maximum information gain, which can only decrease the average. This narrows the candidate partitions to $\frac{K}{2} + 1$ partitions, \emph{i.e.,} linear growth with size, and makes it computationally feasible to evaluate for bigger networks. 

\subsection[\appendixname~\thesection]{Relations' integrated information}
\label{app:relation-proofs}
In this section, we provide the solution to the optimization problem in \eqref{eq:optimization}:
\begin{equation}
\tag{\ref{eq:optimization} repeated}
\begin{aligned}
\max
&\sum_{i = 1}^{|\mathcal{Z}(o)|} \frac{\varphi_{(i)}}{|z_{(i)}|} (2^{|\mathcal{Z}(o)| - i} - 1), \\
\text{subject to}
& \sum_{i = 1}^{|\mathcal{Z}(o)|} \frac{\varphi_{(i)}}{|z_{(i)}|} \leq S(o)
\end{aligned}
\end{equation}
$\frac{\varphi_{(i)}}{|z_{(i)}|}$ is sorted indexing of the elements in $\mathcal{Z}(o)$, such that $(z_{(1)}, \varphi_{(1)})$ has the smallest $\frac{\varphi}{|z|}$ ratio, $(z_{(2)}, \varphi_{(2)})$ has the second smallest $\frac{\varphi}{|z|}$ ratio, and so on.
Thus, we need to formalize the constraint that $\frac{\varphi_{(i)}}{|z_{(i)}|}$ is non-decreasing and non-negative by a change of variable as follows:
$$
\begin{aligned}
\frac{\varphi_{(1)}}{|z_{(1)}|} = x_1,
\frac{\varphi_{(2)}}{|z_{(2)}|} = x_1 + x_2,
\dots,
\frac{\varphi_{(i)}}{|z_{(i)}|} = \sum_{j=1}^i x_j, && x_j \geq 0, \forall j
\end{aligned}
$$
In other words, we are defining $\frac{\varphi_{(i)}}{|z_{(i)}|} = \frac{\varphi_{(i-1)}}{|z_{(i-1)}|} + x_i$, where $x_i$ is a non-negative value. This translates the problem in \eqref{eq:optimization} into:
\begin{equation}
\begin{aligned}
\max
&\sum_{i = 1}^{|\mathcal{Z}(o)|} (2^{|\mathcal{Z}(o)| - i} - 1) \sum_{j=1}^i x_j, \\
\text{subject to}
& \sum_{i = 1}^{|\mathcal{Z}(o)|} \sum_{j=1}^i x_j 
= \sum_{i = 1}^{|\mathcal{Z}(o)|} (|\mathcal{Z}(o)| - i + 1) x_i \leq S(o) \\
& x_i \geq 0,\ i=1,\dots, |\mathcal{Z}(o)|.
\end{aligned}
\end{equation}
This is a relatively easy problem to solve, as it includes maximizing a linear function of $|\mathcal{Z}(o)|$ variables, given $|\mathcal{Z}(o)| +1$ linear constraints. The solution to the problem is one of the $|\mathcal{Z}(o)| +1$ vertices defined by the constraints. The first vertex is $x_i = 0, \forall i$, which is the trivial minimum of the problem. The rest of the vertices have the same property that for some $k$, $x_k = \frac{S(o)}{|\mathcal{Z}(o)| - k + 1}$ and $x_i = 0, i \neq k$. For such vertex, the value of the objective function is:
$$
\frac{S(o)}{|\mathcal{Z}(o)| - k + 1} \sum_{i = k}^{|\mathcal{Z}(o)|} (2^{|\mathcal{Z}(o)| - i} - 1) 
= 
S(o) 
\left(
\frac{2^{|\mathcal{Z}(o)| - k + 1}}{|\mathcal{Z}(o)| - k + 1}
(1 - \frac{1}{2^{|\mathcal{Z}(o)|}}) - 1
\right).
$$
This value is monotonically decreasing with $k$, which means the maximum occurs at the vertex where $x_1 = \frac{S(o)}{|\mathcal{Z}(o)|}$ and $x_i = 0$ for $i=2,\dots,|\mathcal{Z}(o)|$. This corresponds to a system where all the distinctions in $\mathcal{Z}(o)$ have the same $\frac{\varphi}{|z|}$ value, i.e. $\frac{\varphi_{(i)}}{|z_{(i)}|} = \frac{S(o)}{|\mathcal{Z}(o)|}, \forall i$. In other words, if we distribute the sum, $S(o)$, such that the distinction integrated information is proportional to its $|z^*_c \cup z^*_e|$ for all the distinctions, we can achieve the maximum value of:
$$
S(o) 
\left(
\frac{2^{|\mathcal{Z}(o)|}}{|\mathcal{Z}(o)|}
(1 - \frac{1}{2^{|\mathcal{Z}(o)|}}) - 1
\right).
$$

\end{document}